\documentclass{article}

\usepackage{PRIMEarxiv}

\usepackage[utf8]{inputenc} % allow utf-8 input
\usepackage[T1]{fontenc}    % use 8-bit T1 fonts
\usepackage{hyperref}       % hyperlinks
\usepackage{url}            % simple URL typesetting
\usepackage{booktabs}       % professional-quality tables
\usepackage{amsmath}
\usepackage{amsthm}
\usepackage{amssymb}
\usepackage{amsfonts}
\usepackage{mathtools}
\usepackage{amsfonts}       % blackboard math symbols
\usepackage{nicefrac}       % compact symbols for 1/2, etc.
\usepackage{microtype}      % microtypography
\usepackage{fancyhdr}       % header
\usepackage{graphicx}       % graphics
\usepackage{times}
\usepackage{subfig}
\usepackage{xcolor}
\usepackage{array}
\usepackage{cite}
\usepackage{multirow}
\usepackage{adjustbox}
\usepackage{fontawesome}
\usepackage[font=footnotesize,labelfont=bf]{caption}
\usepackage{tikz-network}
\tikzstyle{block} = [draw, fill=blue!10, rectangle, 
    minimum height=1cm, minimum width=1cm]
\tikzstyle{sum} = [draw, fill=black!20, circle, node distance=1.5cm]
\tikzstyle{input} = [coordinate]
\tikzstyle{output} = [coordinate]
\tikzstyle{pinstyle} = [pin edge={to-,thin,black}]
\usepackage{enumitem}

\DeclareMathOperator{\diag}{diag}
\DeclareMathOperator{\tr}{tr}
\DeclareMathOperator{\vect}{vec}

\newtheorem{theorem}{Theorem}
\newtheorem{lemma}{Lemma}

\newtheorem{definition}{Definition}

\newtheorem{remark}{Remark}

\title{A Passivity-Agnostic Framework for \\ Distributed Adaptive Synchronization under \\ Unknown Leader Dynamics
\thanks{This material is based upon work supported in part by the U.S. Office of Naval Research under Grant Award N00014-21-1-2431; and in part by the U.S. National Science Foundation under Grant Award 2208182.}
}

\author{
  Moh Kamalul Wafi, Milad Siami \\
  Department of Electrical and Computer Engineering \\
  Northeastern University \\
  Boston, MA, USA\\
  \texttt{\{wafi.m, m.siami\}@northeastern.edu} \\
}

\begin{document}
\maketitle

\begin{abstract}
We present a passivity-agnostic framework for distributed adaptive synchronization under position-only communication, bounded disturbances, and unknown leader dynamics. By passivity-agnostic we mean the design does not require the closed loop system to be strictly positive real (SPR) a priori: it certifies SPR when present and recovers it by frequency shaping when absent. Followers are heterogeneous second-order systems with unknown (possibly unstable) dynamics. In the SPR regime, a structured reparameterization yields gradient-based adaptive error dynamics; Lyapunov analysis guarantees global asymptotic synchronization in the disturbance-free case, exact rejection of constant disturbances, and bounded responses to time-varying disturbances, with parameter convergence under persistent excitation. In the non-SPR regime, frequency shaping recovers effective passivity of the unshaped transfer function, enabling the same stability guarantees via standard passivity/Lyapunov arguments using Meyer-Kalman-Yakubovich (MKY) Lemma. Simulations across star, cyclic, path, and arbitrary graphs demonstrate scalable synchronization, robust tracking, and parameter adaptation under multiple disturbance profiles, confirming that the frequency-shaped non-SPR designs match the performance of the SPR case.
\end{abstract}
\allowdisplaybreaks

\section{Introduction}\label{sec:Introduction}
Distributed cooperative control of multi-agent systems (MAS) underpins consensus, synchronization, and formation in robotics, autonomous vehicles, smart grids, and sensor networks \cite{R1}. Beyond the classical assumption of identical agent dynamics \cite{R2,R3}, practical systems are heterogeneous \cite{R3,Wafi-IFACWC,R5}. Recent work addresses this heterogeneity under varied tasks and constraints: time-varying output formation with a nonautonomous leader of unknown input \cite{R6}, robust distributed PID consensus via weighted edge dynamics \cite{Wafi-QuadrupleTank}, optimization-based formation with dynamic event-triggered communication \cite{R8}, robustness to uncertain communication links \cite{R9}, and network-adapted guaranteed-cost designs \cite{R10}. These directions span realistic communication limits and architectures. However, the multi-agent dynamics are rarely known exactly in practice as parameters drift and unmodeled effects persist.

This motivates addressing uncertain dynamics with adaptive, scalable designs. Within this setting, many distributed adaptive methods \cite{R11,R12,R13,R14,R15,R16,R17}, explicitly or implicitly, rely on a strictly positive real (SPR) error channel because SPR implies passivity. By the Meyer–Kalman–Yakubovich (MKY) lemma, a quadratic storage function exists that yields a clean dissipation inequality. Practically, this lets one build a Lyapunov function in which the plant–controller cross terms cancel with a gradient update, ensuring bounded parameter estimates and convergence of the synchronization error. Related structural stability mechanisms based on positivity and Lur’e systems have also been studied in \cite{Hedesh-CDC,Hedesh-CCTA}.

In networked settings, however, SPR often fails—e.g., relative-degree/phase lags from limited communication exchange, unmodeled actuator/sensor effects, sampling or delays, and graph-induced lightly damped modes \cite{R24,R25}. When the channel is non-SPR, standard adaptive arguments typically require strong PE, leakage ($\sigma$-modification), or extra damping that may conflict with decentralization \cite{R18}. Existing compensator-based work on non-SPR is largely single-agent \cite{R20,R21,R22}; in networked settings, phase-lead/lag designs typically assume known dynamics \cite{R23}. A unified adaptive treatment that accommodates both SPR and non-SPR regimes for heterogeneous networks under limited communication and unknown dynamics remains largely open.

This paper develops a passivity-agnostic framework for distributed adaptive synchronization under position-only communication, bounded disturbances, and unknown leader dynamics. This passivity-agnostic design does not require a priori SPR of the closed-loop error channel. Furthermore, our main contributions are the followings.
\begin{enumerate}[leftmargin=*]
\item \textbf{SPR reparameterization.} A structured reparameterization in SPR design that yields gradient-type adaptive error dynamics (as opposed to \cite{R11,R12,R13,R14,R15,R16,R17}); Lyapunov analysis gives global asymptotic synchronization (disturbance-free), exact rejection of constant disturbances, bounded responses to time-varying disturbances, and parameter convergence under persistent excitation.
\item \textbf{Non-SPR shaping.} A frequency-shaping design that recovers effective passivity of the error channel, enabling the same stability guarantees via the MKY connection.
\item \textbf{Decentralized architecture.} No leader dynamics; compatible with heterogeneous (possibly unstable) second-order followers under position-only exchanges.
\item \textbf{Scalable validation.} Comprehensive simulations on star, cyclic, path, and arbitrary graphs show robust tracking and parameter adaptation under multiple disturbance profiles; runtime benchmarking confirms favorable scaling as followers grow, alongside a complexity analysis.
\end{enumerate}
In conclusion, compared with SPR-based designs for known dynamics \cite{R2,R3,Wafi-IFACWC,R5,R6,Wafi-QuadrupleTank,R8,R9,R10}, SPR-based adaptive methods for unknown dynamics \cite{R11,R12,R13,R14,R15,R16,R17}, and non-SPR approaches \cite{R20,R21,R22,R23}, which address the two regimes in isolation, the proposed passivity-agnostic framework handles both within one architecture, without a priori SPR certification and without a leader model, yielding a unified, position-only solution for synchronization in heterogeneous networks.

\textbf{Notation.}
For $p\in\mathbb{N}$, $\mathbb{R}^p$ denotes the $p$-dimensional Euclidean space and $I_p\in\mathbb{R}^{p\times p}$ the identity. The term $\diag\{A_i\}$ is the block-diagonal matrix with diagonal blocks $A_i$. The all-ones and all-zeros vectors in $\mathbb{R}^p$ are $\mathbf{1}_p=[1,\dots,1]^\top$ and $\mathbf{0}_p=[0,\dots,0]^\top$. The Kronecker product is $A\otimes B$. We use $\operatorname{tr}(A)$ for the trace, $\sigma(A)$ for the spectrum (eigenvalues) of $A$, and $\operatorname{vec}(A)$ for vectorization. Vector inequalities $>,\ge,<,\le$ are understood elementwise. For any square (not necessarily symmetric) matrix $A$, $A\succ0$ $(A\succeq 0)$ means $\Re\{\sigma_i(A)\}>0$ (respectively $\Re\{\sigma_i(A)\}\ge0$) for all eigenvalues $\sigma_i(A)$; for symmetric $A$ this coincides with positive (semi)definiteness.

% \textbf{Notations}. 
% The set $\mathbb{R}^p$ is the $p$-dimensional Euclidean space. The identity matrix of size $\mathbb{R}^{p\times p}$ is denoted as $I_p$. A diagonal block matrix with entries $A_i, \forall i$, is expressed as $\mathbf{A}=\diag\{A_i\}$. The vector of all ones is denoted by $\mathbf{1}_p=[1,\dots,1]^\top$ in $\mathbb{R}^p$, and similarly, for the vector of all zeros $\mathbf{0}_p=[0,\dots,0]^\top$ in $\mathbb{R}^p$. Also, $A\otimes B$ defines the Kronecker product between $A$ and $B$, while $\tr[A]$ and $\sigma(A)$ are the trace and the eigenvalues of a matrix $A$ in turn. The vectorization operator, $\vect(A)$, transforms matrix $A$ into a column vector. The integral term $\int_{t_1}^{t_n} \omega(\tau)\;d\tau$ is re-written as $\mathbb{I}[t_1,t_n]\{\omega\}$ for simplicity. In vectors, the symbols $>,\geq, <,\leq$ show element-wise inequalities. Additionally, $\mathbf{W}_j$ and $\mathbf{C}_j$ with $j \in \{u,v,w\}$ represent transfer functions.

\section{Communication Network}\label{sec:ComNetwork}
We model a network of $m + 1$ agents, with $m$ followers and a leader denoted as agent $0$, by a directed graph $\mathcal{G}=\{\mathcal{V},\mathcal{E},\mathcal{W}\}$ with vertices $\mathcal{V}=\{0,1,\dots,m\}$, edges $\mathcal{E} \subseteq \mathcal{V} \times \mathcal{V}$, and weights $\mathcal{W}=\{w_{ij}\}$. An edge $(i,j)\in\mathcal{E}$ means agent $j$ communicates to $i$ (not necessarily bidirectional). The weights quantify the interaction strength between agents and satisfy $w_{ij}>0$ if $(i,j)\in\mathcal{E}$ and $w_{ij}=0$ otherwise. The in-neighborhood (the set of incoming edges) of agent $i$ is defined as $\mathcal{N}_i:=\{\,j\mid (i,j)\in\mathcal{E}\,\}$. Furthermore, for analysis, we decompose $\mathcal{G}$ into two induced subgraphs: the follower-only subgraph $\mathcal{G}_m=\{\mathcal{V}_m,\mathcal{E}_m,\mathcal{W}_m\}$ with $\mathcal{V}_m=\{1,\dots,m\}$ and the leader–follower subgraph $\mathcal{G}_0=\{\mathcal{V}_0,\mathcal{E}_0,\mathcal{W}_0\}$ with $\mathcal{V}_0=\{0\}\cup\{i:(i,0)\in\mathcal{E}_0\}$, capturing edges from the leader to followers (see Fig.~\ref{Fig:network}).

In the subgraph $\mathcal{G}_m$, we define the degree matrix $\mathbb{D}_m\coloneqq\diag\{d_1,\dots,d_m\}$, where each entry $d_i$ represents the total weight of the neighbors of follower $i$, i.e., $d_i = \sum_{j:\,(i,j)\in\mathcal{E}_m} w_{ij}$. The adjacency matrix $\mathbb{A}_m$ is denoted as $\mathbb{A}_m(i,j)=w_{ij}$ if $(i,j)\in\mathcal{E}_m$ and $0$ otherwise, while the Laplacian matrix is expressed as $\mathbb{L}_m = \mathbb{D}_m - \mathbb{A}_m$. In the subgraph $\mathcal{G}_0$, the leader-weight matrix is diagonal and is defined as $\mathbb{A}_0(i,i) = w_{i0} $ if $(i, 0) \in \mathcal{E}_0$, and $0$ otherwise. This matrix governs the influence of the leader on each follower, playing a role in defining the network's connectivity.

Let $\mathbb{W}\coloneqq\diag\{w_1,\dots,w_m\}=\mathbb{D}_m+\mathbb{A}_0$, with $w_i=d_i+w_{i0}$. We adopt the balance condition $\mathbb{W}=I_m$, under which
\begin{align*}
    \mathbb{L} \coloneqq \mathbb{L}_m + \mathbb{A}_0 = \mathbb{W} - \mathbb{A}_m,
\end{align*}
satisfies
\begin{align}
    (\mathbb{L} - \mathbb{A}_0) \mathbf{1}_m = \mathbf{0}_m. \label{Bal}
\end{align}
Therefore, $(\mathbb{A}_m + \mathbb{A}_0) \mathbf{1}_m = \mathbf{1}_m$. If initially $\mathbb{W}\neq I_m$, we apply a normalization\footnote{Set $\tilde w_{ij}=w_{ij}/w_i$ for $(i,j)\in\mathcal{E}_m$ and $\tilde w_{i0}=w_{i0}/w_i$ with $w_i=d_i+w_{i0}$; then $\tilde{\mathbb W}=I_m$ and $(\tilde{\mathbb L}-\tilde{\mathbb A}_0)\mathbf{1}_m=\mathbf{0}_m$.} of edge weights (without altering connectivity) to enforce $\mathbb{W}=I_m$ and thus \eqref{Bal}. We formalize the network topology assumptions as follows.

\begin{remark}[Network Connectivity Conditions]\label{Rem:Threshold}
    The leader-weight matrix $\mathbb{A}_0$ is positive semi-definite, i.e., $\mathbb{A}_0 \succeq 0$, with entries $0\le w_{i0}\le 1$ for all $i\in\mathcal{V}_m$, and $\exists w_{i0} \neq 0$ for at least one follower with $(i,0)\in\mathcal{E}_0$. The leader is reachable (there exists a directed path from the leader to every follower), so that $\mathbb{L}=\mathbb{L}_m+\mathbb{A}_0$ is \emph{positive stable}, i.e., all eigenvalues satisfy $\Re\{\sigma_i(\mathbb{L})\}>0$.
\end{remark}
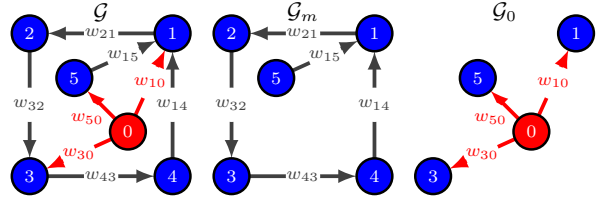
\begin{figure}[t!]
    \centering
    \scalebox{0.95}{{\begin{tikzpicture}
            \centering
            % \draw[black] (-1.5,-1.5) rectangle (1.5,1.5);
            \Text[x=0,y=1.3,fontsize=\small]{$\mathcal{G}$};
            \Vertex[x=.375,y=-.375,label=$0$,color=red,fontcolor=white,size=.5]{L}
            \Vertex[x=1,y=1,label=$1$,color=blue,fontcolor=white,size=.5]{1}
            \Vertex[x=-1,y=1,label=$2$,color=blue,fontcolor=white,size=.5]{2}
            \Vertex[x=-1,y=-1,label=$3$,color=blue,fontcolor=white,size=.5]{3}
            \Vertex[x=1,y=-1,label=$4$,color=blue,fontcolor=white,size=.5]{4}
            \Vertex[x=-.375,y=.375,label=$5$,color=blue,fontcolor=white,size=.5]{5}
            \Edge[Direct,color=red,label=$w_{10}$](L)(1)
            \Edge[Direct,color=red,label=$w_{30}$](L)(3)
            \Edge[Direct,color=red,position={below left},label=$w_{50}$](L)(5)
            \Edge[Direct,label=$w_{21}$](1)(2)
            \Edge[Direct,label=$w_{32}$](2)(3)
            \Edge[Direct,label=$w_{43}$](3)(4)
            \Edge[Direct,label=$w_{14}$](4)(1)
            \Edge[Direct,label=$w_{15}$](5)(1)
        \end{tikzpicture}}}
    \;
    \scalebox{0.95}{{\begin{tikzpicture}
            \centering
            % \draw[black] (-1.5,-1.5) rectangle (1.5,1.5);
            \Text[x=0,y=1.3,fontsize=\small]{$\mathcal{G}_m$};
            \Vertex[x=1,y=1,label=$1$,color=blue,fontcolor=white,size=.5]{1}
            \Vertex[x=-1,y=1,label=$2$,color=blue,fontcolor=white,size=.5]{2}
            \Vertex[x=-1,y=-1,label=$3$,color=blue,fontcolor=white,size=.5]{3}
            \Vertex[x=1,y=-1,label=$4$,color=blue,fontcolor=white,size=.5]{4}
            \Vertex[x=-.375,y=.375,label=$5$,color=blue,fontcolor=white,size=.5]{5}
            \Edge[Direct,label=$w_{21}$](1)(2)
            \Edge[Direct,label=$w_{32}$](2)(3)
            \Edge[Direct,label=$w_{43}$](3)(4)
            \Edge[Direct,label=$w_{14}$](4)(1)
            \Edge[Direct,label=$w_{15}$](5)(1)
        \end{tikzpicture}}}
    \;
    \scalebox{0.95}{{\begin{tikzpicture}
            \centering
            % \draw[black] (-1.5,-1.5) rectangle (1.5,1.5);
            \Text[x=0,y=1.3,fontsize=\small]{$\mathcal{G}_0$};
            \Vertex[x=.375,y=-.375,label=$0$,color=red,fontcolor=white,size=.5]{L}
            \Vertex[x=1,y=1,label=$1$,color=blue,fontcolor=white,size=.5]{1}
            \Vertex[x=-1,y=-1,label=$3$,color=blue,fontcolor=white,size=.5]{3}
            \Vertex[x=-.375,y=.375,label=$5$,color=blue,fontcolor=white,size=.5]{5}
            \Edge[Direct,color=red,label=$w_{10}$](L)(1)
            \Edge[Direct,color=red,label=$w_{30}$](L)(3)
            \Edge[Direct,color=red,position={below left},label=$w_{50}$](L)(5)
        \end{tikzpicture}}}
    \caption{An example of the graph $\mathcal{G}$ and two induced graphs $\mathcal{G}_m$ and $\mathcal{G}_0$ with $m = 5$, illustrating the decoupling of $\mathcal{G}$ and the assignment of $w_{ij}$—not actual communication.}
    \label{Fig:network}
\end{figure}

% Before proceeding, we clarify the leader’s role and the communication model. The leader’s dynamics are unknown and need not be identified; only the leader’s state information is disseminated through the network in a distributed manner, while followers exchange positions locally. When additional measurements (e.g., velocities) are available, they are used by the agents themselves without changing the distributed architecture. The synchronization objective is for all followers to track the leader’s trajectory and thus achieve synchronization over any communication network $\mathcal{G}$ satisfying the balance and reachability conditions in Remark~\ref{Rem:Threshold}. In the presence of unknown follower dynamics and time-varying disturbances, we employ passivity-agnostic distributed adaptive laws; the next two sections discuss the SPR (Sections~\ref{sec:SPR}) and non-SPR regimes (\ref{sec:nonSPR}), respectively.

In the following remark, we clarify the information flow.
\begin{remark}[Information Flow]\label{Rem:Communication}
    \textbf{Leader autonomy:} Only the leader’s state is broadcast through the network; its dynamics are unknown and not identified.
    \textbf{Information exchange}: Followers exchange positions with local neighbors only, as defined by the network topology $\mathcal{G}$.
    \textbf{Local measurements:} When additional measurements (e.g., velocities) are available to individual agents, they are utilized locally without being transmitted, preserving the distributed architecture.
\end{remark}

Finally, for the control objective, all followers must track the leader's trajectory to achieve network synchronization, provided $\mathcal{G}$ satisfies the connectivity conditions in Remark~\ref{Rem:Threshold}. To handle unknown follower dynamics and time-varying disturbances, passivity-agnostic distributed adaptive laws are employed; the next two sections discuss the SPR regime (Section~\ref{sec:SPR}) and the non-SPR regime (Section~\ref{sec:nonSPR}).

\section{Passivity-Certified (SPR) Synchronization}\label{sec:SPR}

\subsection{Theoretical Concepts}\label{subsec:SPR}
For clarity of exposition, we develop the results on second-order followers for the passivity-certified and the passivity-recovered construction. The dynamics are given by
\begin{align}
    J\ddot{\bar{x}} + B\dot{\bar{x}} = \bar{u}, \qquad \ddot{\bar{x}}, \dot{\bar{x}}, \bar{x}, \bar{u} \in\mathbb{R}^m\label{das:SPR:dynamics}
\end{align}
where $\bar{x}=[x_1,\dots,x_m]^\top$ and $\dot{\bar{x}}=[\dot{x}_1,\dots,\dot{x}_m]^\top$ represent the vectors of follower position and velocity, respectively. The diagonal matrices $J=\diag\{J_1,\dots,J_m\}$ with $J_i>0,\forall i$ and $B=\diag\{b_1,\dots,b_m\}$ are unknown constants related to inertia and damping coefficients, while $\bar{u}=[u_1,\dots,u_m]^\top$ defines the control input vector. Although both $\bar{x}$ and $\dot{\bar{x}}$ are measurable locally, only \emph{positions} are exchanged across the network. The transfer function of \eqref{das:SPR:dynamics} is therefore denoted by $\mathbf{M} = \diag\{M_1, \dots, M_m\}$.
Given the communication structure encoded by $\mathbb{A}_m$ and $\mathbb{A}_0$, and letting $\bar{x}_0=\mathbf{1}_m\otimes x_0$ be the replicated leader position, the total information received (incoming arrows) by follower $i$, for all $i \in \mathcal{V}_m$, from its neighbors $j \in \mathcal{N}_i$, is given by
\begin{align}
    \Bar{z} \coloneqq \mathbb{A}_m\Bar{x} + \mathbb{A}_0\Bar{x}_{0}, \label{das:SPR:z}
\end{align}
where $\bar{z}=[z_1,\dots,z_m]^\top$. The corresponding synchronization error (with position feedback $\bar{x}$) is therefore
\begin{align}
    \bar{e} = [e_1,\dots,e_m]^\top = -\mathbb{L}\bar{x} + \mathbb{A}_0\bar{x}_0 \eqqcolon \bar{z}-\bar{x}. \label{das:SPR:error}
\end{align}
We employ the following distributed control law:
\begin{align}
    \bar{u} \coloneqq c_u(\bar{e}) = 2\Phi\Lambda\Bar{e} + \Phi\Lambda^2\mathbb{I}\{\Bar{e}\} + \Phi\dot{\Bar{e}}, \label{das:SPR:u}
\end{align}
where $\Phi = \diag\{\phi_1,\dots,\phi_m\}$ and $\Lambda = \diag\{\lambda_1,\dots,\lambda_m\}$ are both positive definite gain matrices, i.e., $\phi_i,\lambda_i>0,\forall i$. Here, $\mathbb{I}\{\bar{e}\}$ denotes the elementwise time integral\footnote{$\mathbb{I}\{\bar{e}\} \coloneqq \displaystyle\int \bar{e}(\tau)\;d\tau$} of $\bar{e}$. The closed-loop transfer function, indexed\footnote{The subscript $u$ tags all quantities under the (unshaped) control input in \eqref{das:SPR:u} in the SPR analysis—e.g., $\mathbf{W}_u$, $\mathbf{C}_u$, $\Pi_u$, $r_u$, $\Omega_u$, etc. In the non–SPR analysis given in Section~\ref{sec:nonSPR}, the subscript $w$ designates shaped-objects, e.g., $\mathbf{W}_w$, $\mathbf{C}_w$, $\Pi_w$, $r_w$, $\Omega_w$. This convention applies to scalars, vectors, and matrices, and is used solely to separate the SPR and the non–SPR analyses and avoid symbol overloading.} by $u$, associated with \eqref{das:SPR:dynamics}–\eqref{das:SPR:u} can be written as
\begin{align}
    \begin{aligned}
        \mathbf{W}_u &= \Phi(s I_m + \Lambda)^2\,\Pi_u^{-1} \\
        &= \mathbf{C}_u\bigl[J(s^2 I_m) + B(s I_m) + \mathbf{C}_u\bigr]^{-1},
    \end{aligned}
    \label{das:SPR:Wu}
\end{align}
where
\begin{align*}
    \Pi_u = J(s^3 I_m) + (B+\Phi)(s^2 I_m) + 2\Phi\Lambda(s I_m) + \Phi\Lambda^2,
\end{align*}
and $\mathbf{C}_u$ denotes the Laplace transform of $c_u$ defined in \eqref{das:SPR:u}. Note that we design $\bar{u}$ so that $\mathbf W_u$ is SPR according to Definition~\ref{def:SPR}; \eqref{das:SPR:u} is one such choice.
\begin{definition}\label{def:SPR}
A real-rational transfer matrix $\mathbf W(s)$ is \emph{positive real (PR)} if:
\begin{itemize}
  \item $\mathbf W(s)$ is analytic in $\Re(s)>0$; and
  \item the Hermitian part is positive semidefinite,
        $\mathbf W(s)+\mathbf W^\ast (s)\succeq 0$, $\forall s$ with $\Re(s)>0$.
\end{itemize}
It is \emph{strictly positive real (SPR)} if there exists $\epsilon>0$ such that $\mathbf W(s-\epsilon)$ is PR.
\end{definition}

The transfer $\mathbf{W}_u=\diag\{W_{u,i}\}$ is diagonal, each $W_{u,i}$ is proper with relative degree~$1$, and is stable if
\begin{align}
    \Phi \succ \tfrac{1}{2}J\Lambda - B, \label{das:SPR:range:u}
\end{align}
equivalently $\,2(b_i+\phi_i) > J_i\lambda_i\,$ for all $i=1,\dots,m$. Condition \eqref{das:SPR:range:u} (see Remark~\ref{rem:RouthHurwitz}) is the Routh--Hurwitz inequality for the cubic denominator, hence $W_{u,i}$ is analytic in $\Re(s)>0$. For scalar $W_{u,i}$ of relative degree~$1$, Definition~\ref{def:SPR} reduces to: $W_{u,i}$ is analytic in $\Re(s)\!>\!0$ and $\Re\{W_{u,i}(j\omega)\}\!\ge\! 0$ for all $\omega$ (PR), with strict inequality for SPR. Under \eqref{das:SPR:range:u}, the even-polynomial representation of $\Re\{W_{u,i}(j\omega)\}$ is strictly positive for all $\omega$; therefore $\Re\{W_{u,i}(j\omega)\}>0$. Consequently, each $W_{u,i}$ is SPR, and thus $\mathbf W_u$ is SPR.

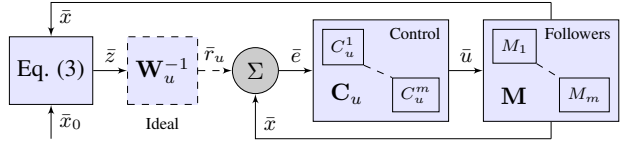
\begin{figure}[t!]
    \centering
    \scalebox{0.9}{\begin{tikzpicture}[auto,>=latex']
    \node [block] (Net) {Eq. \eqref{das:SPR:z}};
    % \node [anchor=north] at ($(Net.north)$) {\scriptsize Network};

    \node [block, dashed, right = .5cm of Net] (Ideal) {$\mathbf{W}_u^{-1}$};
    \node [below=2pt of Ideal, font=\scriptsize] {Ideal};
    \node [sum, right = .5cm of Ideal] (Sum) {$\Sigma$};
    \node [block, right = .5cm of Sum, minimum width=2cm, minimum height=1.5cm] 
    (Act) {};
    \node [block, right = .5cm of Act, minimum width=2cm, minimum height=1.5cm] 
    (Plant) {};
    \node [coordinate, below = .25cm of Plant] (C1) {};
    \node [coordinate, above = .25cm of Plant] (C2) {};
    \node [coordinate, below = .5cm of Net] (C3) {};
    
    % --- inner Plant ---
    \node[draw, rectangle, minimum width=5mm, minimum height=5mm,
          anchor=north west] (pA) at ($(Plant.north west)+(4pt,-4pt)$) {\scriptsize $M_1$};  
    \node[draw, rectangle, minimum width=5mm, minimum height=5mm,
          anchor=south east] (pB) at ($(Plant.south east)+(-4pt,4pt)$) {\scriptsize $M_m$};
    \draw[dashed] (pA.south east) -- (pB.north west);
    \node[anchor=south west] at ($(Plant.south west)+(4pt,4pt)$) {$\mathbf{M}$};
    \node[anchor=north east] at ($(Plant.north east)$) {\scriptsize Followers};

    % --- inner Act ---
    \node[draw, rectangle, minimum width=5mm, minimum height=5mm,
          anchor=north west] (pA) at ($(Act.north west)+(4pt,-4pt)$) {\scriptsize $C_u^{1}$};  
    \node[draw, rectangle, minimum width=5mm, minimum height=5mm,
          anchor=south east] (pB) at ($(Act.south east)+(-4pt,4pt)$) {\scriptsize $C_u^{m}$};
    \draw[dashed] (pA.south east) -- (pB.north west);
    \node[anchor=south west] at ($(Act.south west)+(4pt,4pt)$) {$\mathbf{C}_u$};
    \node[anchor=north east] at ($(Act.north east)$) {\scriptsize Control};
    
    \draw [draw, ->] (Net) -- node{\footnotesize $\bar{z}$} (Ideal);
    \draw [dashed, ->] (Ideal) -- node{\footnotesize $\bar{r}_u$} (Sum);
    \draw [draw, ->] (Sum) -- node{\footnotesize $\bar{e}$} (Act);
    \draw [draw, ->] (Act) -- node{\footnotesize $\bar{u}$} (Plant);
    \draw [draw, -] (Plant) -- (C1);
    \draw [draw, -] (Plant) -- (C2);
    \draw [draw, ->] (C1) -| node[above right]{\footnotesize $\bar{x}$} (Sum);
    \draw [draw, ->] (C2) -| node[below right]{\footnotesize $\bar{x}$} (Net);
    \draw [draw, ->] (C3) -- node[right]{\footnotesize $\bar{x}_0$} (Net);
    
    \end{tikzpicture}}
    \caption{Block diagram of \eqref{das:SPR:dynamics}–\eqref{das:SPR:Wu}. Solid lines depict the nominal loop. The dashed “Ideal” region indicates preprocessing $\mathbf{W}_u^{-1}$ acting on $\bar z$ (cf. \eqref{das:SPR:ref:u2}).}
    \label{Fig:ideal}
\end{figure}
\begin{remark}[Routh--Hurwitz Gain]\label{rem:RouthHurwitz}
    The inequality in \eqref{das:SPR:range:u} is equivalent to $\phi_i > \tfrac{1}{2}J_i\lambda_i - b_i, \forall i=1,\dots,m$. Thus, no ``upper bound'' on $\Phi$ is required: any larger $\phi_i$ preserves the condition and increases the margin. 
    A convenient \emph{uniform} sufficient design is $2\sigma_{\min}(\Phi) > \sigma_{\max}(J\Lambda - 2B)$, which guarantees \eqref{das:SPR:range:u} for all agents simultaneously. 
\end{remark}

Equations \eqref{das:SPR:dynamics}-\eqref{das:SPR:Wu} specify the setup; we now present the method. We further design the reference signals $\bar{r}_u$ for the followers as shown in Fig.~\ref{Fig:ideal}. If $\mathbf{W}_u$ were known, an ideal (exact-tracking) choice is
\begin{align}
    \bar{r}_u &\coloneqq \mathbf{W}_u^{-1}\bar{z} = \hat{J}\ddot{\Omega}_u + \hat{B}\dot{\Omega}_u + \bar{z}, \label{das:SPR:ref:u2}
\end{align}
where $\bar{z}$ is defined in \eqref{das:SPR:z} and $\Omega_u\coloneqq \mathbf{C}_u^{-1}\bar{z}$ is the internal signal generated by the inverse of the control law in \eqref{das:SPR:u}. Here $\hat{J}=\diag\{\hat{J}_1,\dots,\hat{J}_m\}$ and $\hat{B}=\diag\{\hat{b}_1,\dots,\hat{b}_m\}$ are parameter estimates used in place of the unknown $J,B$. The inverse of $\mathbf{W}_u$ in \eqref{das:SPR:ref:u2} is
\begin{align}
    \mathbf{W}_u^{-1} = I_m + \bigl[\hat{J}(s^2 I_m) + \hat{B}(s I_m)\bigr]\mathbf{C}_u^{-1}. \label{das:SPR:inv:Wu}
\end{align}
This construction “matches’’ the network signal $\bar{z}$ through the (estimated) plant dynamics so that, with exact parameters, the closed-loop transfer function is inverted and $\bar{e}=\bar{z}-\bar{x}\to \mathbf{0}_m$ (exact synchronization). 
However, the implementation of \eqref{das:SPR:ref:u2} requires $\dot{\Omega}_u$ and $\ddot{\Omega}_u$, which would entail exchanging neighbor derivatives, and violates position-only communication. 
The next lemma recasts the input into an implementable form; the subsequent remark specifies the substitutions.

\begin{lemma}\label{lem:Control}
    Consider $\bar{r}_u$ in \eqref{das:SPR:ref:u2}, the control law $\bar{c}_u(\bar{e})$ in \eqref{das:SPR:u}, a feedback signal $\bar{x}$, and a network $\mathcal{G}$ satisfying Remark~\ref{Rem:Threshold}. Then, $\bar{u} = \hat{J}\ddot{\bar{z}} + \hat{B}\dot{\bar{z}} + \mathbf{C}_u\bar{e}$. Under position-only exchange as in Remark~\ref{rem:Replace} yields the implementable form
    \begin{align}
        \bar{u} = \mathbf{C}_u\Bar{e} + \hat{J}\ddot{\Bar{x}}_0 + \hat{B}\dot{\Bar{x}}. \label{das:SPR:opt:u}
    \end{align}
\end{lemma}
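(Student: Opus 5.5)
The plan is to work entirely in the Laplace domain with zero initial conditions, as in the construction of $\mathbf{W}_u$ in \eqref{das:SPR:Wu}, and to treat the reference $\bar r_u$ as the signal that replaces $\bar z$ in the error fed to $\mathbf{C}_u$ (the dashed "Ideal" block of Fig.~\ref{Fig:ideal}). The first step is to express the control as $\bar u = \mathbf{C}_u(\bar r_u - \bar x)$. Then we substitute $\bar r_u = \mathbf{W}_u^{-1}\bar z$ using the explicit inverse \eqref{das:SPR:inv:Wu}:
\begin{align*}
\bar u &= \mathbf{C}_u\bigl(I_m + [\hat J s^2 + \hat B s]\mathbf{C}_u^{-1}\bigr)\bar z - \mathbf{C}_u\bar x .
\end{align*}
All matrices involved ($\mathbf{C}_u$, $\hat J$, $\hat B$, $s I_m$) are diagonal, so they commute. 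This lets us distribute $\mathbf{C}_u$ and cancel $\mathbf{C}_u\mathbf{C}_u^{-1}=I_m$. The result is $\bar u = \hat J s^2\bar z + \hat B s\bar z + \mathbf{C}_u(\bar z-\bar x)$. Since $\bar e=\bar z-\bar x$ by \eqref{das:SPR:error}, returning to the time domain gives the first claim, $\bar u = \hat J\ddot{\bar z}+\hat B\dot{\bar z}+\mathbf{C}_u\bar e$. Equivalently, in terms of the internal signal, $\hat J\ddot\Omega_u+\hat B\dot\Omega_u$ filtered by $\mathbf{C}_u$ is exactly $\hat J\ddot{\bar z}+\hat B\dot{\bar z}$, because $\Omega_u=\mathbf{C}_u^{-1}\bar z$. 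We should also check that $\mathbf{C}_u^{-1}$ is well defined. From \eqref{das:SPR:u}, $\mathbf{C}_u = \Phi(sI_m+\Lambda)^2/s$, whose zeros at $-\lambda_i$ are stable, so the inverse is a proper stable filter. This justifies the cancellation.

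The second step passes to the implementable form. By \eqref{das:SPR:z}, $\dot{\bar z}=\mathbb{A}_m\dot{\bar x}+\mathbb{A}_0\dot{\bar x}_0$ and similarly for $\ddot{\bar z}$. Both involve neighbour derivatives that cannot be transmitted under Remark~\ref{Rem:Communication}. The substitutions of Remark~\ref{rem:Replace} are applied here. In the acceleration channel, the term is replaced by the leader's broadcast acceleration, which is justified by the balance condition \eqref{Bal}, $(\mathbb{A}_m+\mathbb{A}_0)\mathbf{1}_m=\mathbf{1}_m$. At synchronization, $\bar x=\bar x_0=\mathbf{1}_m\otimes x_0$, so $\ddot{\bar z}=\ddot{\bar x}_0$. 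In the velocity channel, the term is replaced by the locally measured own velocity $\dot{\bar x}$, which is likewise consistent at the synchronized manifold since $\dot{\bar z}=\dot{\bar x}$ there. This yields \eqref{das:SPR:opt:u}. Any mismatch, namely $\hat J(\ddot{\bar z}-\ddot{\bar x}_0)+\hat B(\dot{\bar z}-\dot{\bar x})$, is a function of error derivatives. I would note that it vanishes on the synchronization manifold, and that it is absorbed into the error dynamics treated in the subsequent Lyapunov analysis.

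The main obstacle is that the second step is a design substitution rather than an identity. The plan is therefore to state clearly that equality holds exactly on the synchronized manifold, via \eqref{Bal} and Remark~\ref{Rem:Threshold}, and to defer the treatment of the residual to the stability analysis. For the first step, the only delicate point is commutativity, and diagonality settles it.
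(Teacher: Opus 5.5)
Your proposal is correct and is essentially the paper's proof: write $\bar u=\mathbf{C}_u(\bar r_u-\bar x)$, substitute \eqref{das:SPR:ref:u2}, cancel $\mathbf{C}_u\mathbf{C}_u^{-1}=I_m$, use $\bar e=\bar z-\bar x$, and then invoke Remark~\ref{rem:Replace} as a design substitution rather than an identity, exactly as the paper does; you additionally make explicit the diagonal commutation and the stable zeros of $\mathbf{C}_u$, which the paper leaves implicit. Two small cautions: diagonality gives the commutation only with $\hat J,\hat B$ treated as frozen (a simplification the paper shares), and the paper's later analysis does not absorb your residual but simply identifies $\ddot{\bar z}$ with $\ddot{\bar x}_0$ in passing from \eqref{das:SPR:Edyn:u2} to \eqref{das:SPR:Edyn:u3}, so you should not defer to it as though it did.
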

\begin{proof}
    By definition, $\bar{u} \coloneqq \mathbf{C}_u(\bar{r}_u - \bar{x}) = \mathbf{C}_u \bar{r}_u - \mathbf{C}_u \bar{x}$. 
    Substituting \eqref{das:SPR:ref:u2}, using $\Omega_u=\mathbf{C}_u^{-1}\bar{z}$ and $\mathbf{C}_u \mathbf{C}_u^{-1} = I_m$, gives
    \begin{align*}
        \bar{u} &= \hat{J} \mathbf{C}_u \mathbf{C}_u^{-1} \ddot{\bar{z}} + \hat{B} \mathbf{C}_u \mathbf{C}_u^{-1} \dot{\bar{z}} + \mathbf{C}_u(\bar{z} - \bar{x}) \\
        &= \hat{J} \ddot{\bar{z}} + \hat{B} \dot{\bar{z}} + \mathbf{C}_u \bar{e}.
    \end{align*}
    Applying the replacements stated in Remark~\ref{rem:Replace} due to limited accessibility yields \eqref{das:SPR:opt:u}.
\end{proof}
\begin{remark}[Position-only Exchange]\label{rem:Replace}
    To preserve position-only communication, we substitute both neighbor derivatives $\dot{\bar{z}}\!=\!\mathbb{A}_m\dot{\bar{x}}+\mathbb{A}_0\dot{\bar{x}}_0$ and $\ddot{\bar{z}}\!=\!\mathbb{A}_m\ddot{\bar{x}}+\mathbb{A}_0\ddot{\bar{x}}_0$, with $\dot{\bar{x}}$ and $\ddot{\bar{x}}_0$, in turn, which are locally measurable for $\dot{\bar{x}}$ (moving the feedforward $\hat{B} \dot{\bar{z}}$ into the feedback $\hat{B} \dot{\bar{x}}$) and leader-disseminated for $\ddot{\bar{x}}_0$. 
\end{remark}

Building on Lemma~\ref{lem:Control}, we reparameterize the control law to incorporate the estimate $\hat{J}$, yielding standard gradient-type error dynamics while preserving the SPR structure (see Remark~\ref{rem:Reparameterization}). With this reparameterization, $\mathbf{C}_u$ is redefined as
\begin{align}
    \mathbf{C}_u
    &= \big[(\Phi + 2\hat{J}\Lambda)(s^2 I_m) + \big(2\Phi\Lambda + \hat{J}\Lambda^2\big)(s I_m) + \big.\Phi\Lambda^2\big](s I_m)^{-1}.
    \label{das:SPR:mod:Cu}
\end{align}
Correspondingly, the modified reference is
\begin{align}
    \bar{r}_u = \bar{z} + \hat{J}(s^2 I_m)\,\mathbf{C}_u^{-1}\bar{z}
    \eqqcolon \hat{J}\,\ddot{\Omega}_u + \bar{z}.
    \label{das:SPR:mod:ref:u}
\end{align}
Using \eqref{das:SPR:opt:u}, the follower input becomes
\begin{align}
    \bar{u}
    &= \mathbf{C}_u \bar{e} + \hat{J}\ddot{\bar{x}}_0 + \hat{B}\dot{\bar{x}} \nonumber\\
    &= (\Phi + 2\hat{J}\Lambda)\dot{\bar{e}}
       + \big(2\Phi\Lambda + \hat{J}\Lambda^2\big)\bar{e} + \Phi\Lambda^2\,\mathbb{I}\{\bar{e}\}
       + \hat{J}\ddot{\bar{x}}_0 + \hat{B}\dot{\bar{x}} \eqqcolon \Phi\vartheta_u + \hat{\Xi}_u\eta_u, \label{das:SPR:mod:u}
\end{align}
in which the two terms $\hat{\Xi}_u\in\mathbb{R}^{m\times 2m},\eta_u\in\mathbb{R}^{2m}$ are defined as $\hat{\Xi}_u = [\hat{J},\hat{B}]$ and $\eta_u = [\zeta_u^\top,\dot{\Bar{x}}^\top]^\top$ where $\zeta_u = \ddot{\Bar{x}}_0 + 2\Lambda\dot{\Bar{e}} + \Lambda^2\Bar{e}$ while the term $\vartheta_u\in\mathbb{R}^{m}$ is given by $\vartheta_u = \dot{\Bar{e}} + 2\Lambda\bar{e} + \Lambda^2\mathbb{I}\{\Bar{e}\}$. The idea of \eqref{das:SPR:mod:u} is to decouple the adaptive parameters $\hat{\Xi}_u$ from the time-varying regressor $\eta_u$ explicitly, while $\Phi$ shapes the error combination $\vartheta_u$ to be driven to zero. To proceed with the stability analysis, we define the parameter estimation error as $\tilde{\Xi}_u = \hat{\Xi}_u - \Xi_u$, and analyze the evolution of $\vartheta_u$ over the network characterized by $\mathbb{L}$ and $\mathbb{A}_0$. The corresponding error dynamics, expressed as the gradient of $\vartheta_u$, is derived as follows
\begin{subequations}
\begin{align}
    \dot{\vartheta}_u &= \ddot{\Bar{e}} + 2\Lambda\dot{\Bar{e}} + \Lambda^2\Bar{e} \label{das:SPR:Edyn:u1} \\
    &= -\mathbb{L}\ddot{\Bar{x}} + \mathbb{A}_0\ddot{\Bar{x}}_0 + 2\Lambda\dot{\Bar{e}} + \Lambda^2\Bar{e} \nonumber \\
    &= \ddot{\Bar{z}} - \ddot{\Bar{x}} + 2\Lambda\dot{\Bar{e}} + \Lambda^2\Bar{e} \label{das:SPR:Edyn:u2} \\
    &= \zeta_u + \bigl[B\dot{\Bar{x}} - \Phi\vartheta_u - \hat{J}\zeta_u - \hat{B}\dot{\Bar{x}}\bigr]J^{-1} \nonumber \\
    % &= \bigl[I_m - \hat{J}J^{-1}\bigr]\zeta_u - \bigl[\tilde{B}\dot{\Bar{x}} + \Phi\vartheta_u\bigr]J^{-1} \nonumber \\
    &= -J^{-1}\Phi\vartheta_u + J^{-1}\bigl[-\tilde{\Xi}_u\eta_u\bigr]. \label{das:SPR:Edyn:u3}
\end{align}
\end{subequations}
which shows the standard error dynamics for the networked adaptive system. In the following subsection, we analyze the stability of the follower dynamics based on \eqref{das:SPR:Edyn:u3}.
\begin{remark}[Why reparameterization?]\label{rem:Reparameterization}
     The reparameterization in \eqref{das:SPR:mod:Cu} brings $\hat{J}$ \emph{inside} the feedback operator via $\mathbf{C}_u$ so that the regressor $\zeta_u$ in \eqref{das:SPR:Edyn:u2} appears explicitly, resulting \eqref{das:SPR:mod:u}. Without \eqref{das:SPR:mod:Cu}, using $\mathbf{C}_u$ in \eqref{das:SPR:u}, one obtains $\bar{u} = \Phi \vartheta_u + \hat{J} \ddot{\bar{x}}_0 + \hat{B} \dot{\bar{x}}$ where $\zeta_u$ cannot be recovered and as a result, the terms $\ddot{\bar{x}}_0$ and $2\Lambda\dot{\bar{e}} + \Lambda\Lambda^\top\bar{e}$ cannot be grouped into a unified regressor, complicating the stability analysis. The structure introduced in \eqref{das:SPR:mod:Cu} is therefore a deliberate design choice, enabling a clean separation under position-only communication while preserving the \emph{SPR} design.
\end{remark}

\subsection{Stability Analysis}\label{subsec:Stability}

In the networked setting, our goal is to ensure parameter convergence and asymptotic synchronization, i.e., $x_i(t)\to x_0(t)$ for all $i$ (equivalently, $\bar e(t)\to \mathbf{0}_m$). Because the true parameters are unknown (no explicit form of $\tilde{\Xi}_u$), we work with the measurable error dynamics in \eqref{das:SPR:Edyn:u3}, capturing the influence of $\tilde{\Xi}_u$, and design distributed adaptive laws using only measurable signals.

\begin{theorem}\label{thm:SPR}
    Consider the follower dynamics \eqref{das:SPR:dynamics} controlled by \eqref{das:SPR:mod:u} over a communication network $\mathcal{G}$ satisfying Remark~\ref{Rem:Threshold}. Let the adaptive law \eqref{das:SPR:Lya:Xiu} be applied to the parameter estimates $\hat{\Xi}_u$ in \eqref{das:SPR:mod:u}. 
    Then the origin of the error system $(\vartheta_u,\tilde{\Xi}_u)=(\mathbf{0}_m,\mathbf{0})$ is globally asymptotically stable. 
    In particular, all closed-loop signals remain bounded and the synchronization error satisfies $\bar e(t)\to \mathbf{0}_m$ as $t\to\infty$. 
\end{theorem}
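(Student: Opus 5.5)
We will use a Lyapunov function built directly on the error dynamics \eqref{das:SPR:Edyn:u3}, $J\dot\vartheta_u = -\Phi\vartheta_u - \tilde\Xi_u\eta_u$, which holds with the true $J,B$ since $\Xi_u=[J,B]$. The candidate is
\begin{align*}
V(\vartheta_u,\tilde\Xi_u) = \tfrac12\vartheta_u^\top J\vartheta_u + \tfrac12\tr\bigl(\tilde\Xi_u\Gamma^{-1}\tilde\Xi_u^\top\bigr),
\end{align*}
with a positive definite diagonal (or block-diagonal) gain $\Gamma\in\mathbb{R}^{2m\times 2m}$. Because $J\succ0$ is diagonal and the true parameters are constant, $\dot{\tilde\Xi}_u=\dot{\hat\Xi}_u$. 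Differentiating gives
\begin{align*}
\dot V = -\vartheta_u^\top\Phi\vartheta_u - \vartheta_u^\top\tilde\Xi_u\eta_u + \tr\bigl(\tilde\Xi_u\Gamma^{-1}\dot{\hat\Xi}_u^\top\bigr).
\end{align*}
Using $\vartheta_u^\top\tilde\Xi_u\eta_u=\tr(\tilde\Xi_u\eta_u\vartheta_u^\top)$, the adaptive law \eqref{das:SPR:Lya:Xiu} should be (or we will verify that it is) the gradient law $\dot{\hat\Xi}_u = \Gamma\,\eta_u\vartheta_u^\top$ transposed appropriately, i.e.\ $\dot{\hat\Xi}_u^\top=\Gamma\eta_u\vartheta_u^\top$. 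To keep the law distributed, we restrict to its diagonal part, so that agent $i$ updates $\hat J_i,\hat b_i$ using only $\vartheta_{u,i}$, $\zeta_{u,i}$, and $\dot x_i$. This is admissible because $\tilde\Xi_u$ is structurally diagonal in each block, so only diagonal entries enter the trace. The cross term then cancels, leaving $\dot V=-\vartheta_u^\top\Phi\vartheta_u\le0$. Note that the SPR property of $\mathbf W_u$ is used here implicitly: the reparameterization \eqref{das:SPR:mod:Cu} has already rendered the channel from $-\tilde\Xi_u\eta_u$ to $\vartheta_u$ a strictly passive first-order system $(Js+\Phi)^{-1}$, so no separate MKY matrix is needed.

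From $\dot V\le 0$ we obtain $\vartheta_u,\tilde\Xi_u\in\mathcal L_\infty$ and $\vartheta_u\in\mathcal L_2$. The step we expect to be the main obstacle is establishing boundedness of the regressor $\eta_u$, and hence of $\dot\vartheta_u$, so that Barbalat's lemma applies. Our plan is as follows. First, note that $\vartheta_u=(s+\Lambda)^2 s^{-1}\bar e$, so $\bar e=s(s+\Lambda)^{-2}\vartheta_u$, and $\mathbb I\{\bar e\}=(s+\Lambda)^{-2}\vartheta_u$. Since $(s+\Lambda)^{-2}$ is a stable, strictly proper filter, $\bar e,\dot{\bar e},\mathbb I\{\bar e\}$ are bounded and belong to $\mathcal L_2$. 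Second, because $\bar e=-\mathbb L\bar x+\mathbb A_0\bar x_0$ and $\mathbb L$ is invertible by Remark~\ref{Rem:Threshold}, we have $\bar x-\bar x_0=-\mathbb L^{-1}\bar e$, using \eqref{Bal} for $\mathbb L\bar x_0=\mathbb A_0\bar x_0$. The same relation holds for the derivatives. Hence $\dot{\bar x}$ is bounded provided the leader trajectory $x_0,\dot x_0,\ddot x_0$ is bounded, which we adopt as a standing assumption on the unknown leader. Then $\zeta_u$ and $\eta_u$ are bounded, so by \eqref{das:SPR:Edyn:u3} $\dot\vartheta_u$ is bounded, and Barbalat's lemma yields $\vartheta_u\to0$. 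Through the stable filter this gives $\bar e\to\mathbf 0_m$, i.e.\ $x_i\to x_0$. If $\ddot{\bar e}$ is needed, we bound it through the same filter argument.

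For the claim about $\tilde\Xi_u\to0$, LaSalle/Barbalat alone gives only $\tilde\Xi_u\eta_u\to0$; we will argue this through $\dot\vartheta_u\to0$, which follows by applying Barbalat again, since $\ddot\vartheta_u$ is bounded. We will then invoke persistent excitation of $\eta_u$ (as in the paper's abstract) in the standard way: the system $(\vartheta_u,\tilde\Xi_u)$ is linear time-varying with a PE regressor, so uniform asymptotic stability of the origin follows. This yields the global asymptotic stability statement. Globality holds because $V$ is radially unbounded and none of the bounds depend on the initial conditions.
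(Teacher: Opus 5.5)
Your proof has the same skeleton as the paper's: a quadratic storage built on \eqref{das:SPR:Edyn:u3}, a gradient law that cancels the cross term, Barbalat, and then the filter back to $\bar e$. The difference is where you absorb the unknown $J$, and that changes which adaptive law you are actually analysing.

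The paper uses $V_u=\vartheta_u^\top\mathbb{P}_u\vartheta_u+\tr[\tilde\Xi_u^\top J^{-1}\Gamma_\xi^{-1}\tilde\Xi_u]$, with $\mathbb{P}_u$ any solution of the Lyapunov equation for $\mathbf{A}_u=-J^{-1}\Phi$. The $J^{-1}$ in the parameter term matches the $J^{-1}$ in the cross term $-2\vartheta_u^\top\mathbb{P}_uJ^{-1}\tilde\Xi_u\eta_u$ (diagonal gains commute with $J^{-1}$). This is what produces \eqref{das:SPR:Lya:Xiu}, namely $\dot{\hat J}=\Gamma_{\xi_1}\mathbb{P}_u\vartheta_u\zeta_u^\top$ and $\dot{\hat B}=\Gamma_{\xi_2}\mathbb{P}_u\vartheta_u\dot{\bar x}^\top$.

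Your $J$-weighted storage dispenses with the Lyapunov equation but produces $\dot{\hat\Xi}_u=\vartheta_u\eta_u^\top\Gamma$. That is not \eqref{das:SPR:Lya:Xiu}: the gains sit on the right and there is no $\mathbb{P}_u$. Your approach covers \eqref{das:SPR:Lya:Xiu} only for diagonal $\mathbb{P}_u$, by taking $\tfrac12\vartheta_u^\top J\mathbb{P}_u\vartheta_u+\tfrac12\tr(\tilde J^\top\Gamma_{\xi_1}^{-1}\tilde J)+\tfrac12\tr(\tilde B^\top\Gamma_{\xi_2}^{-1}\tilde B)$. For a general $\mathbb{P}_u$ you need the paper's weighting.

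In return, your route supplies what the paper's proof only asserts:
\begin{itemize}
\item It bounds $\eta_u$, and hence $\dot\vartheta_u$, via $\bar x-\bar x_0=-\mathbb{L}^{-1}\bar e$ and $\bar e=s(s+\Lambda)^{-2}\vartheta_u$. The paper phrases this filter relation the other way round and leaves boundedness of $\dot x_0,\ddot x_0$ implicit.
\item Its diagonal restriction keeps the update local and is in fact what makes parameter convergence attainable. Under the full $m\times 2m$ update of \eqref{das:SPR:Lya:Xiu}, the limiting regressor $[\mathbf 1_m^\top\ddot x_0,\ \mathbf 1_m^\top\dot x_0]^\top$ spans only two directions of $\mathbb{R}^{2m}$. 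So for $m\ge 2$ it cannot be PE, and $\tilde\Xi_u\eta_u\to\mathbf 0_m$ can at best identify $\tilde J\mathbf 1_m$ and $\tilde B\mathbf 1_m$.
\end{itemize}

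Finally, you are right that $\tilde\Xi_u\to\mathbf 0$ needs PE, and your second Barbalat step also needs $\dddot x_0$ bounded. The paper's proof stops at Lyapunov stability, boundedness, and $\vartheta_u,\bar e\to\mathbf 0_m$. Remark~\ref{rem:LaSalle} itself concedes that without PE $\tilde\Xi_u$ merely converges to a constant. So global asymptotic stability of $(\vartheta_u,\tilde\Xi_u)=(\mathbf 0_m,\mathbf 0)$ is not obtainable without that extra hypothesis, and your added assumption is necessary rather than a defect.
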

\begin{proof}
    Let $V_u(\vartheta_u,\tilde{\Xi}_u)>0$ be a Lyapunov function candidate inspired by the error dynamics in \eqref{das:SPR:Edyn:u3} of the form
    \begin{subequations}
    \begin{align}
        V_u = \vartheta_u^\top\mathbb{P}_u\vartheta_u + \tr\bigl[\tilde{\Xi}_u^\top J^{-1}\Gamma_\xi^{-1}\tilde{\Xi}_u\bigr]\label{das:SPR:Lya:Vu}
    \end{align}
    where $\Gamma_\xi \in \mathbb{R}^{m\times m} \succ 0$. From \eqref{das:SPR:Edyn:u3},
    \begin{equation*}
        \dot{\vartheta}_u = -J^{-1}\Phi\,\vartheta_u - J^{-1}\tilde{\Xi}_u\,\eta_u
        \eqqcolon \mathbf{A}_u\,\vartheta_u + \mathbf{B}_u\,\tilde{\Xi}_u\eta_u,
    \end{equation*}
    where $\mathbf{A}_u=-J^{-1}\Phi\prec0$ and $\mathbf{B}_u=-J^{-1}$. Since $\mathbf{A}_u$ is Hurwitz, there exist $\mathbb{Q}_u=\mathbb{Q}_u^\top\succ0$ and $\mathbb{P}_u=\mathbb{P}_u^\top\succ0$ solving the Lyapunov equation $\mathbf{A}_u^\top\mathbb{P}_u+\mathbb{P}_u\mathbf{A}_u=-\mathbb{Q}_u$. Taking the time derivative of \eqref{das:SPR:Lya:Vu} along the trajectories of \eqref{das:SPR:Edyn:u3}, we obtain
    \begin{align}
        \dot{V}_u &= \vartheta_u^\top\bigl[\mathbf{A}_u^\top\mathbb{P}_u + \mathbb{P}_u\mathbf{A}_u\bigr]\vartheta_u - 2\vartheta_u^\top\mathbb{P}_u J^{-1}\tilde{\Xi}_u\eta_u \nonumber \\ 
        &\quad + 2\tr\bigl[\dot{\tilde{\Xi}}_u^\top J^{-1}\Gamma_\xi^{-1}\tilde{\Xi}_u\bigr] \nonumber \\
        &= -\vartheta_u^\top\mathbb{Q}_u\vartheta_u - 2\vartheta_u^\top\mathbb{P}_u J^{-1}\tilde{J}\zeta_u - 2\vartheta_u^\top\mathbb{P}_u J^{-1}\tilde{B}\dot{\Bar{x}} \nonumber \\
        &\quad + 2\tr\bigl[\eta_u\vartheta_u^\top\mathbb{P}_u\Gamma_\xi J^{-1}\Gamma_\xi^{-1}\tilde{\Xi}_u\bigr] \nonumber \\
        &= -\vartheta_u^\top\mathbb{Q}_u\vartheta_u \leq 0 \label{das:SPR:Lya:DVu}
    \end{align}
    if the adaptive law is suitably chosen as $\dot{\tilde{\Xi}}_u = \Gamma_\xi\mathbb{P}_u\vartheta_u\eta_u^\top$. Alternatively, to account for heterogeneity in agent dynamics and potential asymmetries in network topologies, the term $\Gamma_\xi$ can be decoupled using independent gains as
    \begin{align} 
        \dot{\tilde{\Xi}}_u = \bigl[\Gamma_{\xi_1},\Gamma_{\xi_2}\bigr](I_2 \otimes \mathbb{P}_u \vartheta_u)\diag\{\zeta_u^\top,\dot{\bar{x}}^\top\} \label{das:SPR:Lya:Xiu} 
    \end{align} 
    where $\Gamma_{\xi_1},\Gamma_{\xi_2}\in\mathbb{R}^{m\times m}\succ 0$ serve as independent adaptation gains for $\hat{J}$ and $\hat{B}$, in turn, and $\diag\{\zeta_u^\top,\dot{\bar{x}}^\top\}\in\mathbb{R}^{2\times 2m}$. This formulation enables more flexible adaptation tailored to individual agent characteristics and improves robustness in non-uniform or sparse network configurations, where a shared adaptation rate may be suboptimal. Therefore $V_u(t)$ is nonincreasing and bounded below, i.e., $V_u(t) \leq V_u(0)$, and $\dot{V}_u \to 0$ as $t\to\infty$, implying $\vartheta_u\in\mathcal{L}_2\cap\mathcal{L}_\infty$ and $\tilde{\Xi}_u\in\mathcal{L}_\infty$. Since $\dot{\vartheta}_u$ is bounded under the closed-loop (from \eqref{das:SPR:Edyn:u3} with bounded signals), Barbalat’s lemma gives $\vartheta_u(t)\to 0$. By the definition of $\vartheta_u$ (a strictly proper, stable filter of $\bar e$), this implies $\bar e(t)\to \mathbf{0}_m$.
    \end{subequations}
\end{proof}

\begin{remark}\label{rem:LaSalle}
    By \eqref{das:SPR:dynamics}, \eqref{das:SPR:Edyn:u3}, \eqref{das:SPR:Lya:Vu}, and \eqref{das:SPR:Lya:DVu}, LaSalle’s invariance principle implies that trajectories converge to the largest invariant subset of
    \begin{equation*}
        \mathbb{S} = \bigl\{(\vartheta_u,\tilde{\Xi}_u): \dot V_u(\vartheta_u,\tilde{\Xi}_u)=0\bigr\}
        = \bigl\{(\vartheta_u,\tilde{\Xi}_u): \vartheta_u=\mathbf{0}_m, \dot{\tilde{\Xi}}_u=\mathbf{0}\bigr\}.
    \end{equation*}
    Thus $\vartheta_u\to\mathbf{0}_m$ (hence $\bar e\to\mathbf{0}_m$). If, in addition, the regressor $\eta_u$ is persistently exciting, then $\tilde{\Xi}_u\to \mathbf{0}$; otherwise $\tilde{\Xi}_u$ converges to a constant.
\end{remark}

\subsection{Exact Cancellation of Disturbance}\label{subsec:Disturbance}
Before proceeding, recall that for the disturbance-free case, the error dynamics in \eqref{das:SPR:Edyn:u3} can be rewritten as
\begin{align*}
    \begin{bmatrix}
        \dot{\vartheta}_u \\ \vect(\dot{\tilde{\Xi}}_u)
    \end{bmatrix} = \begin{bmatrix}
        \mathbf{A}_u & \mathbf{B}_u(I_m\otimes \eta_u^\top) \\
        \Gamma_{\xi}\otimes\mathbb{P}_u\eta_u & 0\\
    \end{bmatrix}\begin{bmatrix}
        \vartheta_u \\ \vect(\tilde{\Xi}_u)
    \end{bmatrix},
\end{align*}
which we compactly express as
\begin{align}
    \dot{\beta}_u = \mathbb{F}(\eta_u) \beta_u \label{das:SPR:stack}
\end{align}
where $\beta_u = [\vartheta_u^\top, \vect(\tilde{\Xi}_u)^\top]^\top$, and $\dot{\vartheta}_u = \mathbf{A}_u \vartheta_u + \mathbf{B}_u \tilde{\Xi}_u \eta_u$ with $\mathbf A_u=-J^{-1}\Phi$ and $\mathbf B_u=-J^{-1}$. We now examine the closed loop in the presence of exogenous disturbances. The follower dynamics in \eqref{das:SPR:dynamics} become
\begin{align}
    J\ddot{\bar{x}} + B\dot{\bar{x}} = \bar{u} + \bar{\delta}, \qquad \bar{\delta}\in\mathcal{L}_\infty\label{das:SPR:dynamics:dist}
\end{align}
where $\bar{\delta}=[\delta_1,\dots,\delta_m]^\top$ is a bounded disturbance (e.g., constant or time-varying; see Fig.~\ref{F3b}). With the control law \eqref{das:SPR:mod:u}, the error dynamics \eqref{das:SPR:Edyn:u3} acquire an additive input channel
\begin{equation}
    \dot{\bar{\vartheta}}_u = \mathbf A_u\,\bar{\vartheta}_u+\mathbf B_u\,\tilde{\Xi}_u\,\bar{\eta}_u+J^{-1}\bar{\delta}, \label{das:SPR:Edyn:dist}
\end{equation}
as shown in the Appendix. We distinguish two scenarios according to network connectivity and the leader signal.
\paragraph*{Case 1: disconnected network ($\bar x_0=\mathbf 0$, $\bar z=\mathbf 0, \forall t$).}
Here $\bar e=-\bar x$, so \eqref{das:SPR:mod:u} simplifies to $\Bar{u} = \Phi\bar{\vartheta}_u + \hat{\Xi}_u\bar{\eta}_u$ where $\bar{\vartheta}_u(\bar{x}) = -\dot{\Bar{x}} - 2\Lambda\bar{x} - \Lambda\Lambda^\top\mathbb{I}\{\Bar{x}\}$ and $\bar{\eta}_u = [\bar{\zeta}_u^\top,\dot{\bar{x}}^\top]^\top$ with $\bar{\zeta}_u = -2\Lambda\dot{\Bar{x}} - \Lambda\Lambda^\top\Bar{x}$. Considering \eqref{das:SPR:stack} and stacking $\bar\beta_u=\big[\bar{\vartheta}_u^\top,\vect(\tilde{\Xi}_u)^\top\big]^\top$ yield
\begin{equation}
    \dot{\bar\beta}_u=\mathbb F(\bar{\eta}_u)\,\bar\beta_u + \begin{bmatrix}J^{-1}\bar{\delta} \\ \mathbf 0_{2m^2}\end{bmatrix}.
    \label{eq:bar_beta_dist}
\end{equation}
In disconnected settings the regressor lacks richness; if $\bar x,\dot{\bar x}\notin\mathcal L_2$ (e.g., due to persistent inputs at the disturbance channel), then $\bar{\delta}\notin\mathcal L_2$. Without projection or $\sigma$–modification, $\vect(\tilde{\Xi}_u)$ can drift—this is the well-known lack-of-PE pathology in disconnected networks.
 
\paragraph*{Case 2: connected network ($\bar x_0\neq \mathbf 0,\ \bar z\neq \mathbf 0$).}
Using the Lyapunov function in \eqref{das:SPR:Lya:Vu} and the disturbed dynamics \eqref{das:SPR:Edyn:dist} which is an SPR-driven, strictly proper, stable filter of $\bar{\delta}$ in the $\vartheta_u$-channel, one obtains
\begin{equation}
    \dot V_u = -\vartheta_u^\top\mathbb Q_u\,\vartheta_u + 2\vartheta_u^\top\mathbb P_u J^{-1}\bar{\delta}
    \le -\tfrac{\lambda_{\min}(\mathbb Q_u)}{4}\,\|\vartheta_u\|^2
    + \kappa\,\|\bar{\delta}\|^2, \label{das:SPR:Lya:DVu:dist}
\end{equation}
for some $\kappa>0$ (Young’s inequality). Thus the $\vartheta_u$–subsystem is ISS with respect to $\bar{\delta}$. In particular,
\begin{itemize}
    \item If $\bar{\delta}$ is constant, then the integral term in \eqref{das:SPR:mod:u} provides an internal model for step signals, ensuring that $\mathbf{W}_u\bar{\delta}$ decays exponentially to zero. In particular, under the same adaptive law \eqref{das:SPR:Lya:Xiu} used in Theorem~\ref{thm:SPR}, the Lyapunov analysis carries through with an extra bounded input term, yielding $\vartheta_u\in\mathcal{L}_2\cap\mathcal{L}_\infty$ and $\bar{e}(t)\to \mathbf{0}_m$. Thus constant disturbances are exactly rejected (zero steady-state synchronization error).
    \item If $\bar{\delta}\in\mathcal L_2$, then $\vartheta_u\in\mathcal{L}_\infty$ and $\bar{e}\in\mathcal{L}_\infty$, with $\bar{e}(t)\to \mathbf{0}_m$.
    \item For bounded, time-varying $\bar{\delta}$, both $\vartheta_u$ and $\bar e$ are bounded. The steady-state synchronization error is attenuated according to the (stable) disturbance-to-error map induced by \eqref{das:SPR:Edyn:dist}; exact cancellation holds for the constant component, while higher-frequency content is filtered. When the leader trajectory $\bar{x}_0$ is sufficiently rich (persistently exciting in the standard sense\footnote{$\exists\,T,\alpha>0:\displaystyle \int_t^{t+T} \diag\{\bar{x}_0(\tau)\}^2 d\tau \succeq \alpha I_m,\ \forall t\ge t_0.$}), the regressor $\eta_u$ is PE, implying $\tilde{\Xi}_u(t)\to\mathbf 0$; without such richness, $\tilde{\Xi}_u$ remains bounded and converges to a constant.
\end{itemize}

Equations \eqref{das:SPR:Edyn:dist}–\eqref{das:SPR:Lya:DVu:dist} show that the SPR design ensures exact rejection of constant disturbances and input-to-state stability with respect to bounded time-varying disturbances, with parameter convergence guaranteed under persistent excitation of the leader signal.

\section{Passivity-Recovered (Non-SPR) Synchronization via Frequency Shaping}\label{sec:nonSPR}
Building on the \emph{passivity-certified} design in Section~\ref{sec:SPR}, we now address cases where the closed-loop transfer function is \emph{not} strictly positive real (non-SPR). The goal and the constraint remain the same: to synchronize all agents to a leader trajectory under dynamics in \eqref{das:SPR:dynamics}, with unknown $J$ and $B$. To illustrate a non-SPR case, we introduce a networked adaptive law based on phase-lead compensation denoted by $\bar{w}$, while preserving the same communication constraint (position exchange only; no neighbor velocities/accelerations).

To motivate our development, consider a controller of the form $\bar{w} = \bar{c}_w(\bar{e})$, with controller transfer function
\begin{align}
    \mathbf{C}_w = \Phi\mathbf{N}\mathbf{D}^{-1}, \quad p_i < q_i, \forall i \label{das:nonSPR:ctrl:w}
\end{align}
where $\mathbf{N} \coloneqq sI_m + \diag\{\bar{p}\}$ and $\mathbf{D} \coloneqq sI_m + \diag\{\bar{q}\}$ with $\bar{p} = [p_1,\dots,p_m]^\top$ and $\bar{q} = [q_1,\dots,q_m]^\top$. Although \eqref{das:nonSPR:ctrl:w} is attractive for transient shaping and phase margin improvement, the resulting closed-loop transfer function (denote the unshaped map by $\mathbf{W}_w^-$) is \emph{not} SPR even under nominal conditions, with $\bar{\delta}=0$ and fixed feedback $K=\diag\{\kappa_i\}$. The obstruction is structural: $\mathbf{C}_w$ is proper and the dynamics in \eqref{das:SPR:dynamics} has relative degree $n_d=2$, so $\mathbf{W}_w^-$ inherits relative degree $n_d>1$, violating the necessary SPR condition for the standard Lyapunov/adaptive proofs. Unlike the SPR case—where reparameterization (see \eqref{das:SPR:mod:Cu}-\eqref{das:SPR:Edyn:u3} and Remark~\ref{rem:Reparameterization}) yields the SPR transfer $\mathbf{W}_u$—a direct reparameterization with \eqref{das:nonSPR:ctrl:w} does \emph{not} fix this non-SPR issue; a more fundamental modification is required.

To resolve non-SPR while retaining distributed implementation, we develop two complementary designs:
Scenario 1 assumes local velocity availability and applies output shaping to construct an SPR transfer function; 
Scenario 2 enforces position-only exchange and shapes the regressor via stable filtering so that the induced map is SPR. 
Both preserve distributed implementability, require no prior knowledge of \(J,B\), and enable the same Lyapunov/adaptive analysis used in the SPR section after shaping. We analyze both scenarios in the disturbance-free setting; once SPR is established, disturbance handling follows directly from Subsection~\ref{subsec:Disturbance}.

\textbf{Scenario 1 (output shaping; $\dot{\bar{x}}$ available locally).} Define the shaped network outputs
\begin{align}
    \bar{y} = (sI_m + \Theta)\bar{x}, \qquad \Theta = \diag\{\theta_i\} \succ 0 \label{das:nonSPR:y}
\end{align}
where $\Theta$ defines tunable parameters and $\bar{y} = [y_1,\dots,y_m]^\top$. Given \eqref{das:SPR:dynamics}, \eqref{das:nonSPR:ctrl:w}, and a feedback gain $K^\ast$, the closed-loop transfer function is written as
\begin{align}
    \mathbf{W}_w = \mathbf{C}_w(sI_m + \Theta)\Pi_w^{-1}, \label{das:nonSPR:Ww}
\end{align}
where $\Pi_w = J(s^2I_m) + B(sI_m) + K^\ast \mathbf{C}_w(sI_m + \Theta)$ and more precisely, $K^\ast = \diag\{k_i^\ast\}$ is a gain for which $\mathbf{W}_w$ has desired phase margins. The factor $(sI_m+\bar \theta)$ in \eqref{das:nonSPR:y} lowers the relative degree of the transfer function so that $\mathbf W_w$ becomes SPR and admits a passivity-based interpretation, in which its internal model can be stabilized via Lyapunov arguments.

Furthermore, compared to \eqref{das:SPR:z} and \eqref{das:SPR:error}, we use the shaped consensus signals $\bar{e}_y = \bar{z}_y - \bar{y}$ due to the shaped variables in \eqref{das:nonSPR:y}. The meaning of $\bar{z}_y = \mathbb{A}_m\Bar{y} + \mathbb{A}_0\Bar{y}_0$ is similar to $\bar{z}$ and $\bar{y}_0 = \mathbf{1}_m\otimes y_0$, with $y_0=(s+\theta_0)x_0$ is the shaped leader signal for this setting as opposed to $\bar{x}_0$. The (ideal) reference signal assuming the known $\mathbf{W}_w$ is denoted as
\begin{align}
    \Bar{r}_w &= \mathbf{W}_w^{-1}\bar{z}_y \nonumber \\
    &= \hat{K}\Bar{z}_y + \bigl[\hat{J}(s^2I_m) + \hat{B}(sI_m)\bigr]\bigl[\mathbf{C}_w(sI_m + \Theta)\bigr]^{-1} \Bar{z}_y \nonumber \\
    &= \hat{J}\ddot{\Omega}_w + \hat{B}\dot{\Omega}_w + \hat{K}\Bar{z}_y, \label{das:nonSPR:ref:w}
\end{align}
where $\Omega_w = \bigl[\mathbf{C}_w(sI_m + \Theta)\bigr]^{-1} \Bar{z}_y \approx \mathbf{C}_w^{-1} \Bar{z}$ and $\hat{J}$, $\hat{B}$, $\hat{K}\coloneqq K^\ast + \tilde{K}$ are the estimates. Introduce
\begin{align*}
    \bar{e}_w \coloneqq \bar{r}_w - \hat{K}\bar{y} = \hat{J}\ddot{\Omega}_w + \hat{B}\dot{\Omega}_w + \hat{K}\Bar{e}_y,
\end{align*}
and define the parameter estimation error $\tilde{\Xi}_w = \hat{\Xi}_w - \Xi_w$, where $\hat{\Xi}_w = [\hat{K}, \hat{J}, \hat{B}]$ and $\Xi_w$ is the true parameter. Then
\begin{align} 
    \bar{e}_w = \tilde{\Xi}_w \eta_w + K^\ast \bar{e}_y + J\ddot{\Omega}_w + B\dot{\Omega}_w, \label{das:nonSPR:opt:w1} 
\end{align} 
where $\eta_w = [\bar{e}_y^\top, \ddot{\Omega}_{w}^\top, \dot{\Omega}_{w}^\top]^\top$. Consistent with position-exchange constraints (Remark~\ref{rem:Replace}), inaccessible derivatives of $\bar z_y$ are replaced by leader-based signals filtered through $\mathbf C_w^{-1}$
\begin{align} 
    \bar{e}_w &= \tilde{\Xi}_w \eta_w + K^\ast \bar{e}_y + J\mathbf{C}_w^{-1} \ddot{\bar{x}}_0 + B \mathbf{C}_w^{-1} \dot{\bar{x}}_0 \nonumber \\
    &= \tilde{\Xi}_w \eta_w + \underbrace{K^\ast \bar{e}_y + J \ddot{\Omega}_{x_0} + B \dot{\Omega}_{x_0}}_{\Xi_w \eta_w}, \label{das:nonSPR:opt:w2}
    % &= \tilde{\Xi}_w \eta_w + \Xi_w \eta_w \label{das:nonSPR:opt:w2} 
\end{align} 
where $\ddot{\Omega}_{x_0} = \mathbf{C}_w^{-1} \ddot{\bar{x}}_0$ and $\dot{\Omega}_{x_0} = \mathbf{C}_w^{-1} \dot{\bar{x}}_0$, with the resulting time varying regressor $\eta_w = [\bar{e}_y^\top, \ddot{\Omega}_{x_0}^\top, \dot{\Omega}_{x_0}^\top]^\top$. To certify SPR, we invoke the Meyer–Kalman–Yakubovich (MKY) lemma, which links the condition in Definition~\ref{def:SPR} to the existence of a Lyapunov function; proof in \cite{R18}.
\begin{lemma}[MKY]\label{lem:MKY}
    Consider a transfer function $\mathbf{W}(s)$, a matrix $\gamma I_m, \gamma>0$ and a symmetric matrix $\varphi\succ 0$. If 
    \begin{align*}
        \Re\bigl[\mathbf{W}(i\omega)\bigr] &\triangleq \Re\bigl[\tfrac{\gamma}{2}I_m + \mathbf{W}\bigr]  > 0, \forall \omega\in\mathbb{R}
    \end{align*}
    where $\mathbf{W} = \mathbf{c}^\top(i\omega I_{mn} - \mathbf{A})^{-1}\mathbf{B}$, then $\exists \epsilon > 0, \alpha \coloneqq \diag\{\alpha_1,\dots,\alpha_m\}, \alpha_i\in\mathbb {R}^n, \mathbb{P}\succ 0$, such that $\mathbf{A}^\top\mathbb{P} + \mathbb{P}\mathbf{A} = -\alpha\alpha^\top - \epsilon\varphi$ and $\mathbb{P}\mathbf{B} - \mathbf{c} = \sqrt{\gamma}\alpha$.
\end{lemma}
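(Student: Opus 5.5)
The statement is the Meyer--Kalman--Yakubovich lemma in its version with a feedthrough term. I plan to prove it by the standard route: turn the frequency-domain positivity condition into a spectral factorization, realize the factor in state space, and read off $\mathbb{P}$ and $\alpha$ from a Lyapunov equation. The implicit standing assumptions are those used in \cite{R18}: $\mathbf{A}$ is Hurwitz, and $(\mathbf{A},\mathbf{B})$ is controllable, or at least stabilizable, and $(\mathbf{A},\mathbf{c}^\top)$ is observable. Under these assumptions the full transfer function is $Z(s)=\tfrac{\gamma}{2}I_m+\mathbf{c}^\top(sI-\mathbf{A})^{-1}\mathbf{B}$, and the hypothesis reads $Z(i\omega)+Z(i\omega)^\ast\succ0$ for all $\omega\in\mathbb{R}$. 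Since $\gamma>0$, this also holds uniformly as $\omega\to\infty$.

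\textbf{Step 1 (margin for $\epsilon\varphi$).} Because $\mathbf{A}$ is Hurwitz and $\varphi\succ0$, I will solve $\mathbf{A}^\top\mathbb{P}_\varphi+\mathbb{P}_\varphi\mathbf{A}=-\varphi$. I then define the perturbed function
\begin{align*}
Z_\epsilon(s)=Z(s)-\epsilon\,\mathbf{B}^\top(-sI-\mathbf{A}^\top)^{-1}\mathbb{P}_\varphi(sI-\mathbf{A})^{-1}\mathbf{B}.
\end{align*}
On the imaginary axis the perturbation is bounded uniformly in $\omega$ and decays like $\omega^{-2}$. Since $Z(i\omega)+Z(i\omega)^\ast\succeq\mu I$ for some $\mu>0$, a small enough $\epsilon>0$ keeps the Hermitian part of $Z_\epsilon$ positive definite on the whole axis. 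This step is only a compactness argument, using continuity in $\omega$ together with the positive limit $\gamma I$ at infinity.

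\textbf{Step 2 (spectral factorization).} Apply the Youla/Anderson spectral factorization to the para-Hermitian function $\Psi(s)=Z_\epsilon(s)+Z_\epsilon^\top(-s)$. It is positive definite on $i\mathbb{R}$ and has the value $\gamma I$ at infinity, so it admits a factorization $\Psi(s)=H^\top(-s)H(s)$ with $H(s)=\sqrt{\gamma}I_m+\alpha^\top(sI-\mathbf{A})^{-1}\mathbf{B}$. The factor $H$ shares the poles of $\mathbf{A}$, and $\alpha$ is block-structured as in the statement when the realization is block diagonal. Writing $\Psi$ in state space and using $\mathbb{P}:=\mathbb{P}_0+\epsilon\mathbb{P}_\varphi$, where $\mathbb{P}_0$ solves $\mathbf{A}^\top\mathbb{P}_0+\mathbb{P}_0\mathbf{A}=-\alpha\alpha^\top$, I will match the terms of equal structure. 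Using the identity
\begin{align*}
(-sI-\mathbf{A}^\top)^{-1}X(sI-\mathbf{A})^{-1}=(-sI-\mathbf{A}^\top)^{-1}\mathbb{P}+\mathbb{P}(sI-\mathbf{A})^{-1}
\end{align*}
for $X=-(\mathbf{A}^\top\mathbb{P}+\mathbb{P}\mathbf{A})$, comparing the strictly proper cross terms gives $\mathbb{P}\mathbf{B}-\mathbf{c}=\sqrt{\gamma}\,\alpha$, and comparing the constant terms gives $\gamma I=\gamma I$. Finally, $\mathbb{P}\succ0$ because $\mathbb{P}_\varphi\succ0$ and $\mathbb{P}_0\succeq0$, since both solve Lyapunov equations with a Hurwitz $\mathbf{A}$. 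The Lyapunov equation stated in the lemma then holds by construction.

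\textbf{Main obstacle.} The hardest part is Step 2, namely the existence of a rational factor $H$ that has the \emph{same} state matrix $\mathbf{A}$ and a constant term exactly $\sqrt{\gamma}I$. Rather than invoking abstract factorization, I would obtain it through the algebraic Riccati equation
\begin{align*}
\mathbf{A}^\top\mathbb{P}+\mathbb{P}\mathbf{A}+\epsilon\varphi+\gamma^{-1}(\mathbb{P}\mathbf{B}-\mathbf{c})(\mathbb{P}\mathbf{B}-\mathbf{c})^\top=0.
\end{align*}
Its stabilizing solution exists whenever the associated Hamiltonian matrix has no imaginary-axis eigenvalues, and that condition is equivalent to the frequency inequality established in Step 1. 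Setting $\alpha=\gamma^{-1/2}(\mathbb{P}\mathbf{B}-\mathbf{c})$ then yields both identities directly, and $\mathbb{P}\succ0$ follows from the Lyapunov form. I would defer the full Hamiltonian argument to \cite{R18}, which the paper already cites for this lemma.
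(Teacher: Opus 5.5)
The paper gives no proof of its own here; it defers to \cite{R18}, where this is the classical scalar MKY lemma. That lemma assumes $\mathbf{A}$ Hurwitz (and, in some versions, $(\mathbf{A},\mathbf{B})$ controllable). You are right to restore Hurwitz $\mathbf{A}$: the conclusion forces it, since $\mathbb{P}\succ0$ and $\mathbf{A}^\top\mathbb{P}+\mathbb{P}\mathbf{A}\preceq-\epsilon\varphi\prec0$. Your route is the standard one behind that citation: an $\epsilon$-margin by compactness, then a factorization obtained from the Riccati equation. The Riccati version is the cleanest part of your proposal. Any real symmetric solution gives both identities with $\alpha=\gamma^{-1/2}(\mathbb{P}\mathbf{B}-\mathbf{c})$, and $\mathbb{P}\succ0$ follows from the Lyapunov form. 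Only Hurwitz $\mathbf{A}$ is used, which also supplies the stabilizability the Hamiltonian argument needs. The coefficient matching in your Step 2, by contrast, needs controllability, not mere stabilizability: $D^\top(sI-\mathbf{A})^{-1}\mathbf{B}\equiv0$ gives $D=0$ only when $(\mathbf{A},\mathbf{B})$ is controllable. With your sign in $H$, the matching also yields $\mathbb{P}\mathbf{B}-\mathbf{c}=-\sqrt{\gamma}\alpha$, which is harmless after $\alpha\mapsto-\alpha$.

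Two steps fail as written.

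First, the perturbation in Step 1 has the wrong middle matrix. The target identities are equivalent to $\gamma I_m+\mathbf{W}(s)+\mathbf{W}^\top(-s)-\epsilon\,\mathbf{B}^\top(-sI-\mathbf{A}^\top)^{-1}\varphi(sI-\mathbf{A})^{-1}\mathbf{B}=H^\top(-s)H(s)$, with $\varphi$, not $\mathbb{P}_\varphi$, in the sandwich. Equivalently, take $Z_\epsilon(s)=Z(s)-\epsilon\mathbf{B}^\top\mathbb{P}_\varphi(sI-\mathbf{A})^{-1}\mathbf{B}$, i.e.\ replace $\mathbf{c}$ by $\mathbf{c}-\epsilon\mathbb{P}_\varphi\mathbf{B}$. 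Your own resolvent identity turns its para-Hermitian part into the $\varphi$-sandwich. Your $\Psi$ instead contains $-2\epsilon\mathbf{B}^\top(-sI-\mathbf{A}^\top)^{-1}\mathbb{P}_\varphi(sI-\mathbf{A})^{-1}\mathbf{B}$. Consistent matching then produces the Lyapunov equation with $2\mathbb{P}_\varphi$ in place of $\varphi$, so $\mathbb{P}_0+\epsilon\mathbb{P}_\varphi$ does not satisfy $\mathbb{P}\mathbf{B}-\mathbf{c}=\pm\sqrt{\gamma}\alpha$. The frequency inequality tied to your Riccati Hamiltonian is likewise the $\varphi$ version. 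The compactness argument itself is unaffected.

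Second, and this cannot be patched within the statement as posed: $\alpha$ need not be block-diagonal for a general symmetric $\varphi\succ0$, even with a block-diagonal realization. Take $m=2$, $n=1$, $\mathbf{A}=-I_2$, $\mathbf{B}=\mathbf{c}=I_2$, so $\mathbf{W}=(s+1)^{-1}I_2$ meets the hypothesis. Let $\varphi$ have diagonal entries $2$ and off-diagonal entries $1$. The Lyapunov equation forces $\mathbb{P}=\tfrac12(\alpha\alpha^\top+\epsilon\varphi)$, whose $(1,2)$ entry is $\epsilon/2\neq0$ when $\alpha$ is diagonal. Yet $\mathbb{P}\mathbf{B}-\mathbf{c}=\mathbb{P}-I_2$ must equal the diagonal matrix $\sqrt{\gamma}\alpha$, a contradiction.

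The missing hypothesis is that $\varphi=\diag\{\varphi_i\}$ be conformal with a block-diagonal realization $\mathbf{A}=\diag\{A_i\}$, $\mathbf{B}=\diag\{b_i\}$, $\mathbf{c}=\diag\{c_i\}$. Under it, apply the scalar result to each block and take $\epsilon=\min_i\epsilon_i$. This is admissible because shrinking $\epsilon$ only enlarges the perturbed Hermitian part. Then assemble $\mathbb{P}=\diag\{P_i\}$ and $\alpha=\diag\{\alpha_i\}$.
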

From \eqref{das:nonSPR:opt:w2}, the optimal term $\Xi_w\eta_w$ corresponds to the ideal feedforward control that drives $\bar{e}_y \coloneqq \mathbf{W}_w\Xi_w\eta_w$ to zero, ensuring consensus. In contrast, the non-optimal term $\tilde{\Xi}_w \eta_w$ must be shown to decay over time. The error model can be derived as $\bar{e}_y \coloneqq \mathbf{W}_w\tilde{\Xi}_w\eta_w$, which is equivalent to the state-space $\dot{\bar{e}}_x = \mathbf{A}_w\bar{e}_x + \mathbf{B}_w\tilde{\Xi}_w\eta_w$ and $\bar{e}_y = \mathbf{c}_w^\top\bar{e}_x$ in which $\bar{e}_x$ represents the internal state. Therefore, based on Lemma~\ref{lem:MKY}, if $\mathbf{W}_w$ in \eqref{das:nonSPR:Ww} is SPR, then $\exists\mathbb{Q}_w=\mathbb{Q}_w^\top\succ 0$ such that $\mathbb{P}_w=\mathbb{P}_w^\top\succ 0$ solves the equation of $\mathbf{A}_w^\top\mathbb{P}_w + \mathbb{P}_w\mathbf{A}_w = -\mathbb{Q}_w$ with $\mathbb{P}_w\mathbf{B}_w = \mathbf{c}_w$. With the Lyapunov function
\begin{subequations}
\begin{align}
    V_w(\bar{e}_x,\tilde{\Xi}_w) = \bar{e}_x^\top\mathbb{P}_w\bar{e}_x + \tr\bigl[\tilde{\Xi}_w^\top\Gamma_\xi^{-1}\tilde{\Xi}_w\bigr], \label{das:nonSPR:Lya:Vw}
\end{align}
its derivative along $\bar{e}_y = \mathbf{W}_w\tilde{\Xi}_w\eta_w$ becomes
\begin{align}
    \dot{V}_w &= \bar{e}_x^\top\bigl[\mathbf{A}_w^\top\mathbb{P}_w + \mathbb{P}_w\mathbf{A}_w\bigr]\bar{e}_x + 2\bar{e}_x^\top\mathbb{P}_w\mathbf{B}_w\tilde{\Xi}_w\eta_w + 2\tr\bigl[\dot{\tilde{\Xi}}_w^\top\Gamma_\xi^{-1}\tilde{\Xi}_w\bigr] \nonumber \\
    &= -\bar{e}_x^\top\mathbb{Q}_w\bar{e}_x + 2\bar{e}_x^\top\mathbb{P}_w\mathbf{B}_w\tilde{\Xi}_w\eta_w + 2\tr\bigl[\eta_w\bar{e}_y^\top\Gamma_\xi\Gamma_\xi^{-1}\tilde{\Xi}_w\bigr] \nonumber \\
    &= -\bar{e}_x^\top\mathbb{Q}_w\bar{e}_x \leq 0, \label{das:nonSPR:Lya:DVw}
\end{align}
if the adaptive law $\Tilde{\Xi}_w$ is suitably chosen as 
\begin{align}
    \dot{\tilde{\Xi}}_w = -[\Gamma_{\xi_1},\Gamma_{\xi_2},\Gamma_{\xi_3}](I_3\otimes\bar{e}_y)\diag\{\bar{e}_y^\top,\ddot{\Omega}_{x_0}^\top,\dot{\Omega}_{x_0}^\top\} \label{das:nonSPR:Lya:Xiw}
\end{align}
with diagonal gains $\Gamma_{\xi_k}\succ 0$, $k = 1,2,3$.
Hence \(\bar e_x,\bar e_y,\tilde{\Xi}_w\in\mathcal L_\infty\), \(\bar e_x\in\mathcal L_2\), and \(\bar e_x(t)\to0\). Since the transfer $\mathbf{W}_w$ is strictly proper and stable, \(\bar e_y(t)\to \mathbf 0_m\), i.e., synchronization holds in the disturbance-free setting after shaping.
\end{subequations}

\textbf{Scenario 2 (regressor shaping; only $\bar{x}$ available).}
When only $\bar x$ is measurable, the output shaping $(sI_m+\Theta)$ of \eqref{das:nonSPR:y} cannot be implemented in the shared signals. The closed-loop transfer function reduces to $\mathbf{W}_w^- = \mathbf{C}_w (\Pi_w^-)^{-1}$ where $\Pi_w^- = J(s^2I_m) + B(sI_m) + K^\ast \mathbf{C}_w$. Consequently, the (ideal) references \eqref{das:nonSPR:ref:w} and errors \eqref{das:nonSPR:opt:w2} are formed with $\bar z$ and $\bar e$ (rather than $\bar z_y,\bar e_y$). In particular, the time-varying regressor becomes $\eta_w = [\bar{e}^\top, \ddot{\Omega}_{x_0}^\top, \dot{\Omega}_{x_0}^\top]^\top$ and, nonetheless, the error model $\bar{e} = \mathbf{W}_w^-\tilde{\Xi}_w\eta_w$ is \emph{not} SPR (relative degree $n_d >1$). To recover passivity, we shape the regressor by a stable pre-filter and use its inverse inside the adaptation:
\begin{align}
    \bar{e} &= \mathbf{W}_w^-(sI_m + \Theta)\tilde{\Xi}_w\bar{\eta}_w \label{das:nonSPR:opt:w3}
\end{align}
where $\bar{\eta}_w = \eta_w(sI_m + \Theta)^{-1}$. The composite transfer $\mathbf W_w \coloneqq \mathbf W_w^-(sI_m+\Theta)$ has $n_d = 1$ and is rendered SPR. Hence we can reuse the Lyapunov function in \eqref{das:nonSPR:Lya:Vw} (now with the transfer $\mathbf W_w$) and obtain the same dissipation inequality as in \eqref{das:nonSPR:Lya:DVw}.
Choosing a uniform gradient law $\dot{\tilde{\Xi}}_w = -\Gamma_{\xi}\bar e\bar\eta_w^\top$, with $\Gamma_{\xi} \succ 0$ yields $\dot V_w\le 0$, implying $\bar e_x,\bar e\in\mathcal L_\infty$, $\bar e_x\in\mathcal L_2$, and $\bar e_x(t)\to0$. Since $\mathbf W_w$ is strictly proper and stable, $\bar e(t)\to\mathbf 0_m$, establishing synchronization.

\section{Numerical Simulations}\label{sec:NumRes}

This section validates the distributed adaptive control framework across (i) unstable heterogeneous dynamics, (ii) multiple graph topologies, and (iii) diverse disturbance profiles. The dynamics of the followers are modeled as in \eqref{das:SPR:dynamics:dist} with unknown $J, B$ and the leader trajectory $x_0(t)=\sin(t)+0.75\cos(2t)$. The parameters of the unstable heterogeneous followers used in the simulations are as follows:
\begin{align}
    J_i = 0.5 + 0.1i,\quad b_i = -1.3 - 0.1i, \label{das:NumRes:Par}
\end{align}
with control gains $\phi_i = 1.5 + 0.5i$ and $\lambda_i=1$ for $i=1,\dots,8$, and adaptation gains $\Gamma_{\xi_k}=5I_m$, $k=1,2,3$. Three different disturbances are considered to evaluate the robustness of the control designs: $\bar{\delta}_1 = 0.25$, $\bar{\delta}_2 = 0.45 \sin(2t)$, and $\bar{\delta}_3 = 0.7 \sin(2t) + 0.3 \cos(3t)$ as shown in Fig.~\ref{F3b}. Four topologies (see Fig.~\ref{Fig:topology}), all satisfying Remark~\ref{Rem:Threshold}, are used with the following edge weights:
\begin{enumerate}
% \begin{enumerate}[leftmargin=*]
    \item \textbf{Star}: $w_{i0}=1,\ \forall i$.
    \item \textbf{Cyclic}: $w_{i0}=0.5$ and $w_{ij}=0.25$ for $j\in\mathcal N_i\!\setminus\!\{0\}$.
    \item \textbf{Series (path)}: edges $(i,i-1)$ with $w_{ij}=1$.
    \item \textbf{Arbitrary}: as in Table~\ref{Table:WeightArbitrary}.
\end{enumerate}
\begin{table}[h!]
    \centering
    \caption{Weights used in arbitrary network}
        \begin{adjustbox}{width=.7\columnwidth}
        \begin{tabular}{c|l|c|l}
            \toprule
            Agent & Weights & Agent & Weights \\
            \midrule
            1 & $w_{10}=1$ & 5 & $w_{52}=.25$, $w_{57}=.75$\\
            2 & $w_{20}=.75$, $w_{24}=.25$ & 6 & $w_{63}=.75$, $w_{68}=.25$\\
            3 & $w_{30}=1$ & 7 & $w_{75}=.5$, $w_{76}=.45$, $w_{78}=.05$\\
            4 & $w_{41}=.5$, $w_{42}=.5$ & 8 & $w_{86}=.5$, $w_{87}=.5$\\
            \bottomrule
        \end{tabular}
        \end{adjustbox}
    \label{Table:WeightArbitrary}
\end{table}

\begin{figure}[t!]
    \centering
    \scalebox{0.75}{{\begin{tikzpicture}
            \centering
            % \draw[black] (-1.5,-1.5) rectangle (1.5,1.5);
            \Text[x=0,y=1.5,fontsize=\small]{Star};
            \Vertex[x=0,y=0,label=$0$,color=red,fontcolor=white,size=.5]{L}
            \Vertex[x=1,y=0,label=$1$,color=blue,fontcolor=white,size=.5]{1}
            \Vertex[x=1,y=1,label=$2$,color=blue,fontcolor=white,size=.5]{2}
            \Vertex[x=0,y=1,label=$3$,color=blue,fontcolor=white,size=.5]{3}
            \Vertex[x=-1,y=1,label=$4$,color=blue,fontcolor=white,size=.5]{4}
            \Vertex[x=-1,y=0,label=$5$,color=blue,fontcolor=white,size=.5]{5}
            \Vertex[x=-1,y=-1,label=$6$,color=blue,fontcolor=white,size=.5]{6}
            \Vertex[x=0,y=-1,label=$7$,color=blue,fontcolor=white,size=.5]{7}
            \Vertex[x=1,y=-1,label=$8$,color=blue,fontcolor=white,size=.5]{8}
            \Edge[Direct,color=red](L)(1)
            \Edge[Direct,color=red](L)(2)
            \Edge[Direct,color=red](L)(3)
            \Edge[Direct,color=red](L)(4)
            \Edge[Direct,color=red](L)(5)
            \Edge[Direct,color=red](L)(6)
            \Edge[Direct,color=red](L)(7)
            \Edge[Direct,color=red](L)(8)
        \end{tikzpicture}}}
    \,
    \scalebox{0.75}{{\begin{tikzpicture}
            \centering
            % \draw[black] (-1.5,-1.5) rectangle (1.5,1.5);
            \Text[x=0,y=1.5,fontsize=\small]{Cyclic};
            \Vertex[x=0,y=0,label=$0$,color=red,fontcolor=white,size=.5]{L}
            \Vertex[x=1,y=0,label=$1$,color=blue,fontcolor=white,size=.5]{1}
            \Vertex[x=1,y=1,label=$2$,color=blue,fontcolor=white,size=.5]{2}
            \Vertex[x=0,y=1,label=$3$,color=blue,fontcolor=white,size=.5]{3}
            \Vertex[x=-1,y=1,label=$4$,color=blue,fontcolor=white,size=.5]{4}
            \Vertex[x=-1,y=0,label=$5$,color=blue,fontcolor=white,size=.5]{5}
            \Vertex[x=-1,y=-1,label=$6$,color=blue,fontcolor=white,size=.5]{6}
            \Vertex[x=0,y=-1,label=$7$,color=blue,fontcolor=white,size=.5]{7}
            \Vertex[x=1,y=-1,label=$8$,color=blue,fontcolor=white,size=.5]{8}
            \Edge[Direct,style={dashed},color=red](L)(1)
            \Edge[Direct,style={dashed},color=red](L)(2)
            \Edge[Direct,style={dashed},color=red](L)(3)
            \Edge[Direct,style={dashed},color=red](L)(4)
            \Edge[Direct,style={dashed},color=red](L)(5)
            \Edge[Direct,style={dashed},color=red](L)(6)
            \Edge[Direct,style={dashed},color=red](L)(7)
            \Edge[Direct,style={dashed},color=red](L)(8)
            \Edge[](1)(2)
            \Edge[](2)(3)
            \Edge[](3)(4)
            \Edge[](4)(5)
            \Edge[](5)(6)
            \Edge[](6)(7)
            \Edge[](7)(8)
            \Edge[](8)(1)
        \end{tikzpicture}}}
    \,
    \scalebox{0.75}{{\begin{tikzpicture}
            \centering
            % \draw[black] (-1.5,-1.5) rectangle (1.5,1.5);
            \Text[x=0,y=1.5,fontsize=\small]{Series};
            \Vertex[x=0,y=0,label=$0$,color=red,fontcolor=white,size=.5]{L}
            \Vertex[x=1,y=0,label=$1$,color=blue,fontcolor=white,size=.5]{1}
            \Vertex[x=1,y=1,label=$2$,color=blue,fontcolor=white,size=.5]{2}
            \Vertex[x=0,y=1,label=$3$,color=blue,fontcolor=white,size=.5]{3}
            \Vertex[x=-1,y=1,label=$4$,color=blue,fontcolor=white,size=.5]{4}
            \Vertex[x=-1,y=0,label=$5$,color=blue,fontcolor=white,size=.5]{5}
            \Vertex[x=-1,y=-1,label=$6$,color=blue,fontcolor=white,size=.5]{6}
            \Vertex[x=0,y=-1,label=$7$,color=blue,fontcolor=white,size=.5]{7}
            \Vertex[x=1,y=-1,label=$8$,color=blue,fontcolor=white,size=.5]{8}
            \Edge[Direct,color=red](L)(1)
            \Edge[Direct](1)(2)
            \Edge[Direct](2)(3)
            \Edge[Direct](3)(4)
            \Edge[Direct](4)(5)
            \Edge[Direct](5)(6)
            \Edge[Direct](6)(7)
            \Edge[Direct](7)(8)
        \end{tikzpicture}}}
    \,
    \scalebox{0.75}{{\begin{tikzpicture}
            \centering
            % \draw[black] (-1.5,-1.5) rectangle (1.5,1.5);
            \Text[x=0.5,y=1.5,fontsize=\small]{Arbitrary};
            \Vertex[x=0,y=1,label=$0$,color=red,fontcolor=white,size=.5]{L}
            \Vertex[x=1,y=0,label=$3$,color=blue,fontcolor=white,size=.5]{3}
            \Vertex[x=2,y=-1,label=$7$,color=blue,fontcolor=white,size=.5]{7}
            \Vertex[x=2,y=0,label=$6$,color=blue,fontcolor=white,size=.5]{6}
            \Vertex[x=2,y=1,label=$8$,color=blue,fontcolor=white,size=.5]{8}
            \Vertex[x=-1,y=0,label=$1$,color=blue,fontcolor=white,size=.5]{1}
            \Vertex[x=-1,y=-1,label=$4$,color=blue,fontcolor=white,size=.5]{4}
            \Vertex[x=0,y=-1,label=$2$,color=blue,fontcolor=white,size=.5]{2}
            \Vertex[x=1,y=-1,label=$5$,color=blue,fontcolor=white,size=.5]{5}
            \Edge[Direct,color=red](L)(1)
            \Edge[Direct,style={dashed},color=red](L)(2)
            \Edge[Direct,color=red](L)(3)
            \Edge[Direct](1)(4)
            \Edge[Direct](2)(5)
            \Edge[Direct](3)(6)
            \Edge[Direct](6)(7)
            \Edge[Direct](6)(8)
            \Edge[color=green,label=I](2)(4)
            \Edge[color=green,label=II](5)(7)
            \Edge[bend = -45,color=green,label=III](7)(8)
        \end{tikzpicture}}}
    \caption{Four network topologies are considered adhering to Remark~\ref{Rem:Threshold}. The blue circles with a number in each topology indicate the followers while the arrows show the communication network. In general, the red arrows express to which follower the leader trajectory is delivered $(\mathcal{G}_0)$ while the black arrows are the links among the followers $(\mathcal{G}_m)$. More specifically, the red dashed arrows in cyclic and arbitrary network mean that we perform in-depth analysis in various scenarios while the green arrows in arbitrary network shows the exchange information from various levels of followers from the leader $x_0$.}
    \label{Fig:topology}
\end{figure}
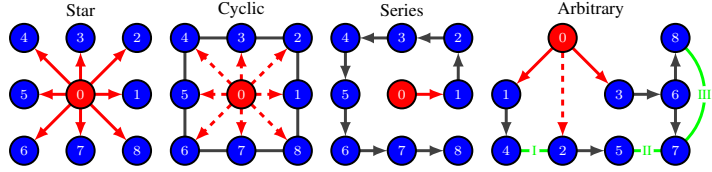
\begin{table}[t!]
  \centering
  \caption{Scaling of simulation runtime with the number of agents \(m\).
  Entries are median wall-clock time per run (seconds) over 20 runs.
  Hardware (MacBook Pro, Apple M2, 8\,GB) and solver settings (MATLAB) fixed across runs.}
  \label{Table:Scaling}
  \begin{adjustbox}{width=0.5\columnwidth}
  \begin{tabular}{r|ccc|ccc}
    \toprule
    & \multicolumn{3}{c}{SPR} & \multicolumn{3}{c}{non-SPR (Scenario 1)} \\
    \cmidrule(lr){2-4}\cmidrule(lr){5-7}
    \(\,m\,\) & Star & Cyclic & Path & Star & Cyclic & Path \\
    \midrule
     50  & 0.0201 & 0.0235 & 0.0793 & 0.0193 & 0.0272 & 0.0781 \\
    100  & 0.0341 & 0.0320 & 0.1692 & 0.0313 & 0.0490 & 0.1592 \\
    150  & 0.1056 & 0.0909 & 0.7605 & 0.1036 & 0.1005 & 0.8655 \\
    200  & 0.1184 & 0.1273 & 1.1563 & 0.1271 & 0.1537 & 1.2065 \\
    250  & 0.1325 & 0.1507 & 1.7258 & 0.1511 & 0.1773 & 1.8385 \\
    \bottomrule
  \end{tabular}
  \end{adjustbox}
\end{table}

We report results in three parts: (i) SPR (passivity-certified) synchronization—disturbance rejection and network-level behavior across topologies; (ii) non-SPR (passivity-recovered via frequency shaping) synchronization; and (iii) scalability, via runtime benchmarking for SPR and non-SPR settings as \(m\) increases.

\begin{itemize}[leftmargin=*]
    \item \textbf{Disturbance response (single agent).} Figure~\ref{F3a} reports the response for the most challenging follower ($i=8$) under $\bar{\delta}_k$, $k=1,2,3$. In all cases the follower synchronizes to $x_0$, while $\hat{J}$ and $\hat{B}$ converge, confirming robustness and disturbance rejection of the \emph{passivity-certified} design.
    
    \item \textbf{Network performance (SPR).} Figure~\ref{F4} shows network-level synchronization across star, cyclic, path, and arbitrary graphs under the worst disturbance $\bar{\delta}_3$. Despite large variations in connectivity and information flow, all followers track $x_0$. Hence, the SPR transfer $\mathbf W_u$ ensures uniform convergence margins across heterogeneous topologies.
    
    \item \textbf{Influence of cyclic leader weights (SPR).} Figure~\ref{F5a} varies the direct leader weight $w_{i0}$ in the cyclic graph. For $w_{i0}\!\ge\!0.15$, synchronization is achieved; at $w_{i0}=0.05$ the network drifts toward follower-only consensus, revealing insufficient external excitation. To probe partial access, Fig.~\ref{F5b} fixes a minimal leader weight $w_{i0}=0.15$ on only $i^c$ agents. Even with $i^c\le 5$, the network synchronizes efficiently—performance comparable to larger leader weights $(w_{i0}\in\{0.95,0.75,0.5\})$. Thus, modest, well-placed leader links suffice to preserve synchronization.
    
    \item \textbf{Arbitrary topology robustness (SPR).} Figure~\ref{F6a} varies a leader edge ($w_{20}\in\{0.05,0.15,0.25,0.5\}$) in the arbitrary graph. Unlike the cyclic case, all configurations synchronize to $x_0$, even with $w_{20}=0.05$, highlighting the resilience of irregular connectivity via alternative paths. Figure~\ref{F6b} then removes selected links (sets I–III in Fig.~\ref{Fig:topology}), violating $(\mathbb L-\mathbb A_0)\mathbf 1_m=0$; synchronization persists and the error ($\mathcal L_2$-norm) remains bounded, showing that the SPR design tolerates structural perturbations without retuning.
    
    \item \textbf{Passivity-recovered (non-SPR) synchronization.} Figures~\ref{F7a}–\ref{F7b} evaluate the \emph{passivity-recovered} controllers. Both implementations—shaped output combinations and position-only feedback—match the SPR performance across all topologies under $\bar{\delta}_3$. This confirms that frequency shaping recovers effective passivity of (unshaped) transfer function, making the overall loop satisfy the same Lyapunov-based guarantees and delivering synchronization comparable to the SPR case.

    % \item \textbf{Complexity Analysis.} Per update, each agent performs \(O(1)\) local compute and exchanges positions with its in-neighbors \(\bigl(O(|\mathcal N_i|)\bigr)\) communication. Aggregated over the network, the compute cost is \(O(m)\) and the communication cost is \(O(|\mathcal E|)\): linear in \(m\) for sparse graphs (bounded average degree) and \(O(m^2)\) for dense graphs. Empirical runtimes for both SPR and non-SPR settings as \(m\) increases corroborate these trends; see Table~\ref{Table:Scaling}.

    \item \textbf{Complexity Analysis.} Per update, each agent does \(O(1)\) local compute and \(O(\mathcal N_i)\) communication. Network-wide, compute is \(O(m)\) and communication is \(O(\mathcal E)\)—linear in \(m\) for sparse graphs, \(O(m^2)\) for dense ones. Empirical runtimes for SPR and non-SPR setting corroborate this; see Table~\ref{Table:Scaling}.
\end{itemize}

Across unstable heterogeneous dynamics, disturbances $\bar{\delta}_1$–$\bar{\delta}_3$, and diverse graphs, the \emph{passivity-certified} (SPR) design yields exact step disturbance rejection and bounded responses to time-varying inputs, with parameter convergence under persistent excitation. The \emph{passivity-recovered} (non-SPR) design achieves comparable tracking after frequency shaping. Taken together, the simulations support the central claim: a passivity-agnostic, distributed adaptive framework can deliver reliable synchronization in large-scale networks, whether the closed loop is inherently SPR or rendered SPR by design.

\begin{figure}[t!]
    \centering
    \subfloat[\label{F3a}\scriptsize Performance and Convergence]{\includegraphics[width=0.4\linewidth]{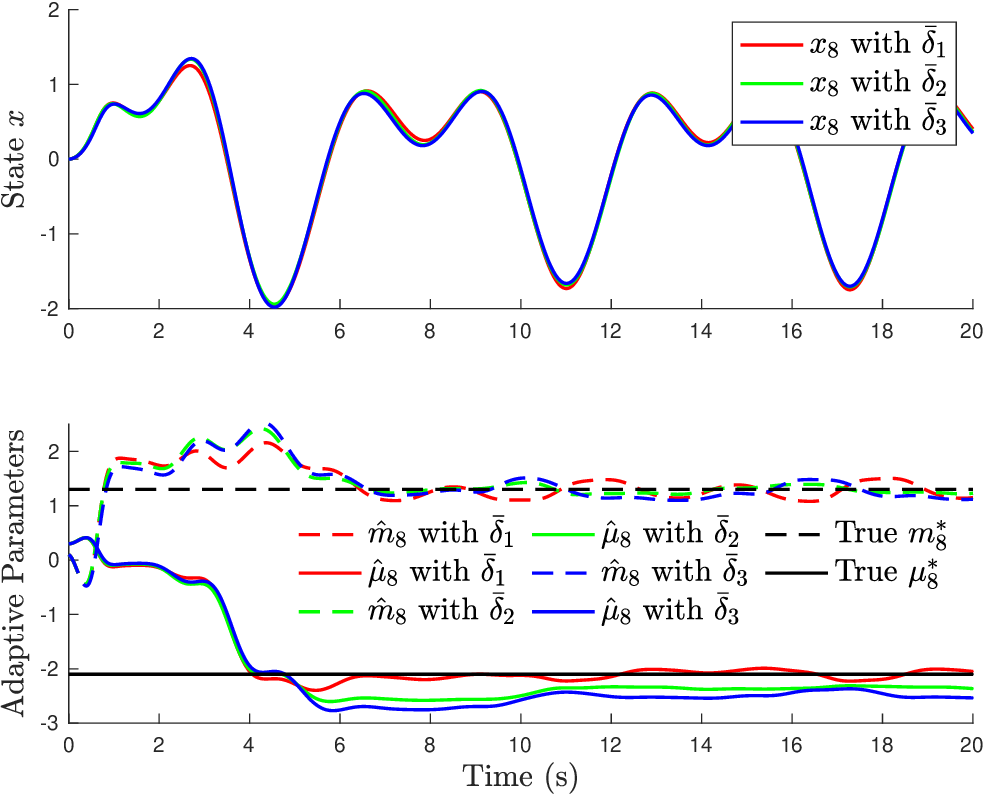}}\qquad
    \subfloat[\label{F3b}\scriptsize Disturbance Profiles]{\includegraphics[width=0.4\linewidth]{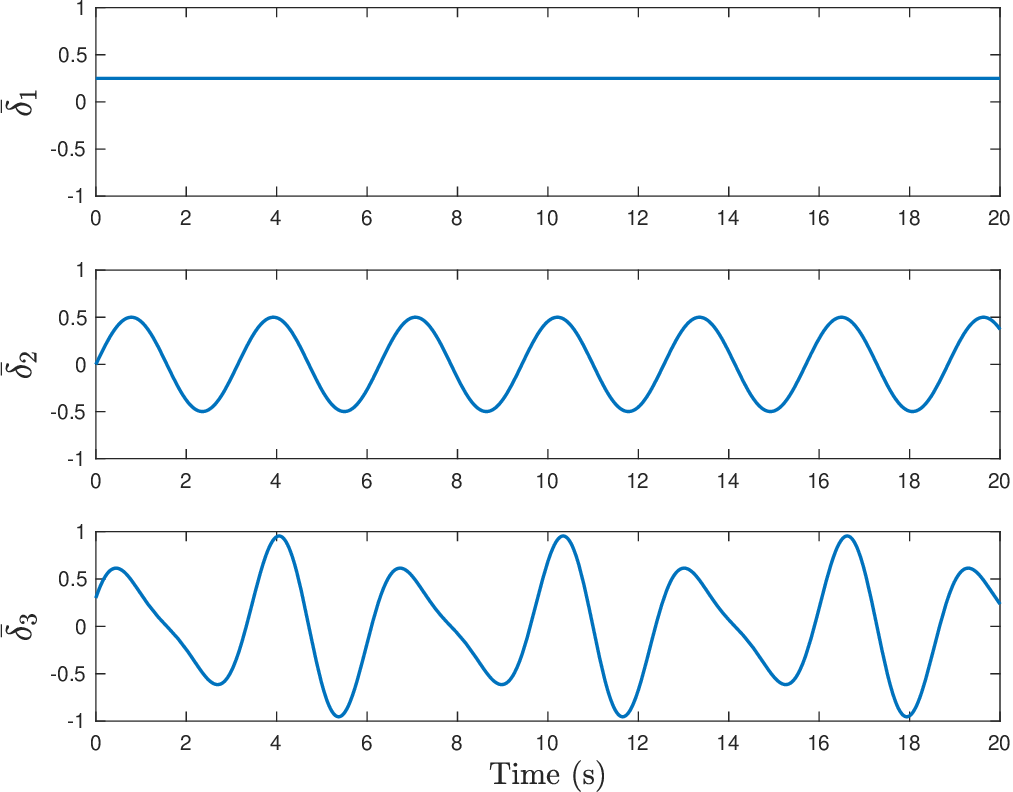}}
    \caption{The performance output and the convergence for various $\bar{\delta}_k$.}
    \label{F3}
\end{figure}
\begin{figure}[t!]
    \centering
    \includegraphics[width=0.4\linewidth]{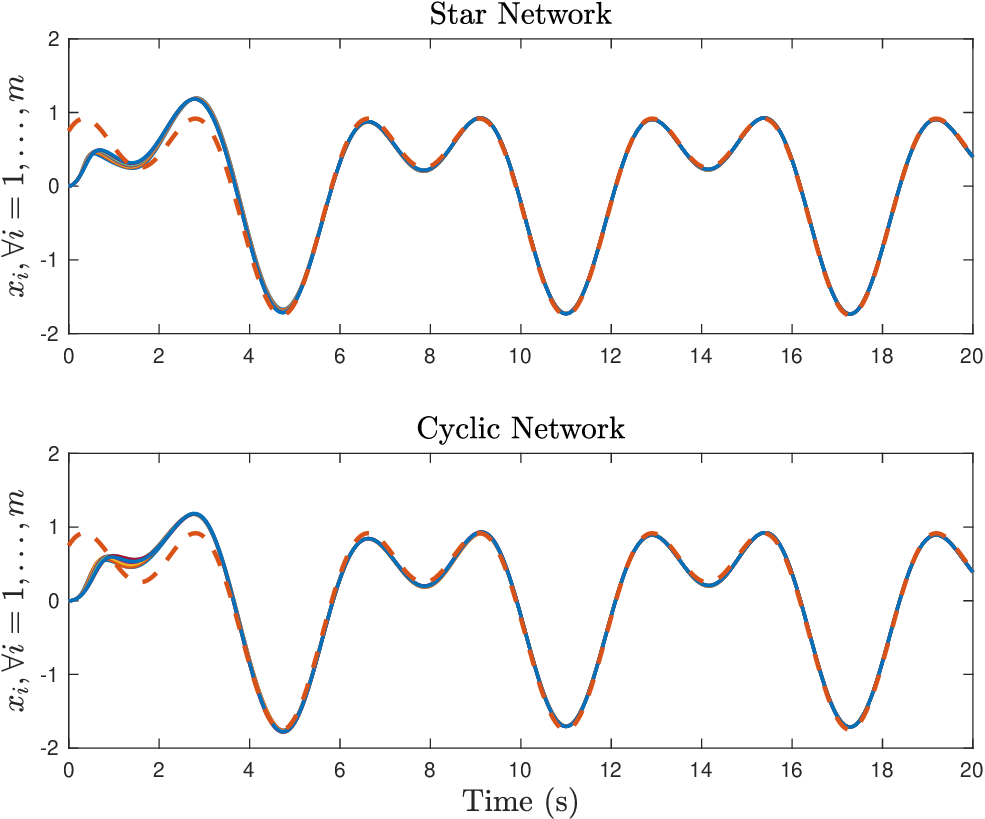}\qquad
    \includegraphics[width=0.4\linewidth]{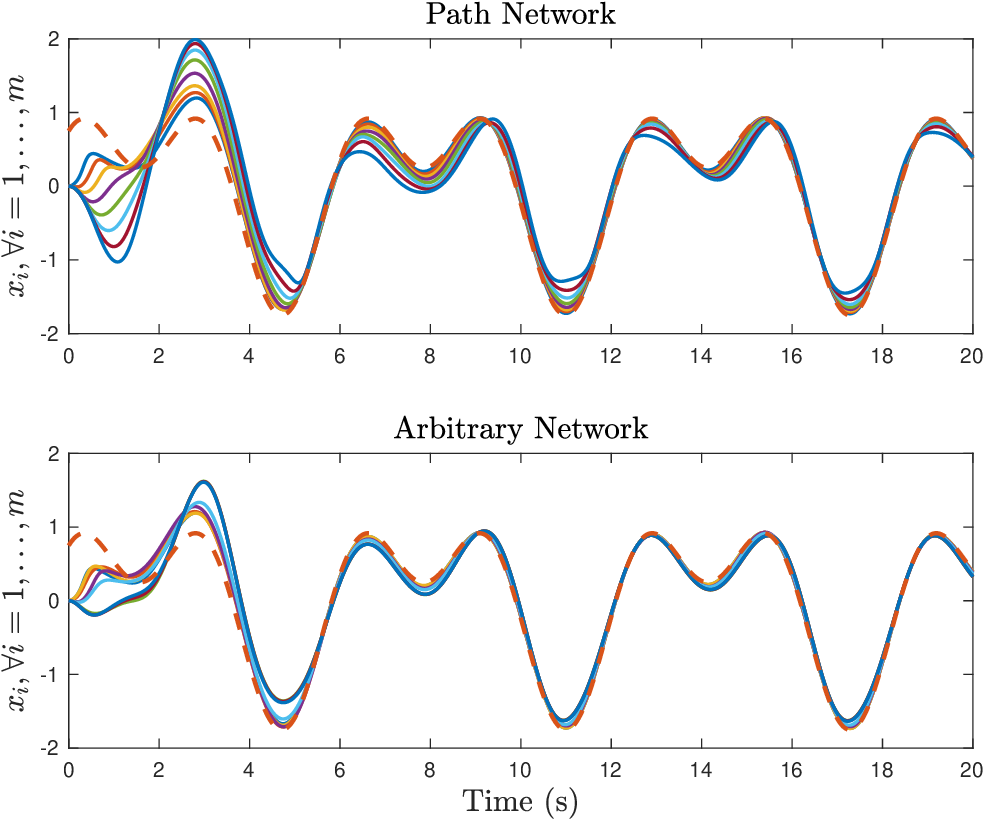}
    \caption{The performance outputs for four network topologies using $\bar{\delta}_3$}
    \label{F4}
\end{figure}
\begin{figure}[t!]
    \centering
    \subfloat[\label{F5a}]{\includegraphics[width=0.4\linewidth]{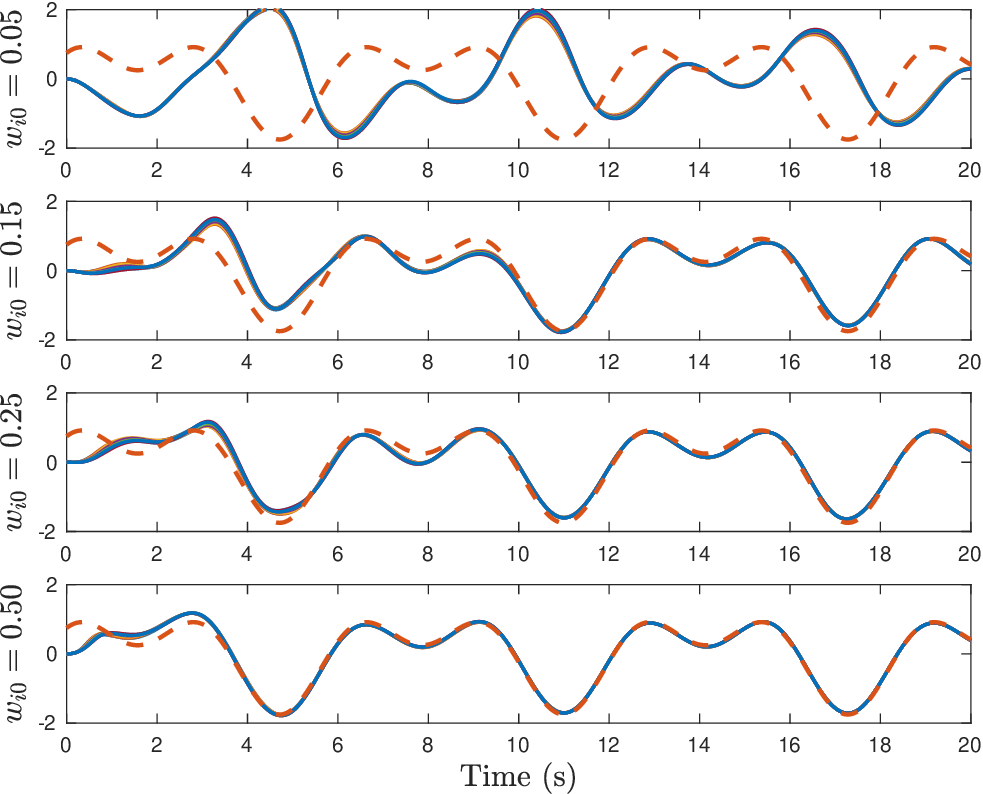}}\qquad
    \subfloat[\label{F5b}]{\includegraphics[width=0.4\linewidth]{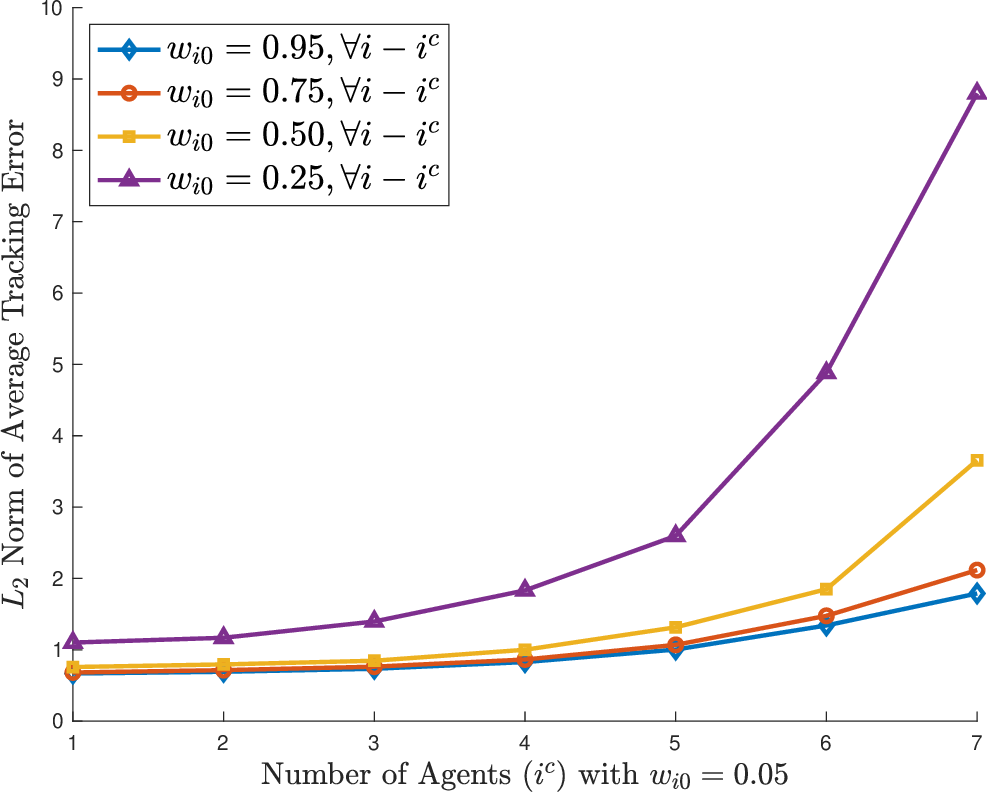}}
    \caption{Effect of leader weight $w_{i0}$ in cyclic networks under disturbance $\bar{\delta}_3$.}
    \label{F5}
\end{figure}
\begin{figure}[t!]
    \centering
    \subfloat[\label{F6a}]{\includegraphics[width=0.4\linewidth]{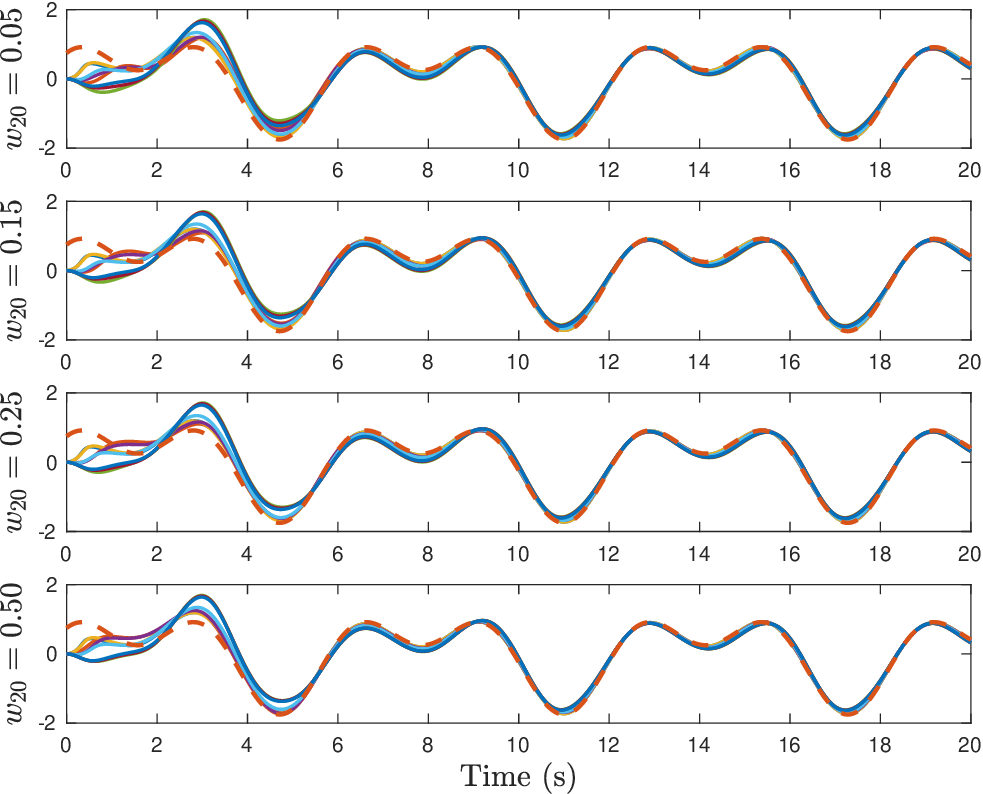}}\qquad
    \subfloat[\label{F6b}]{\includegraphics[width=0.4\linewidth]{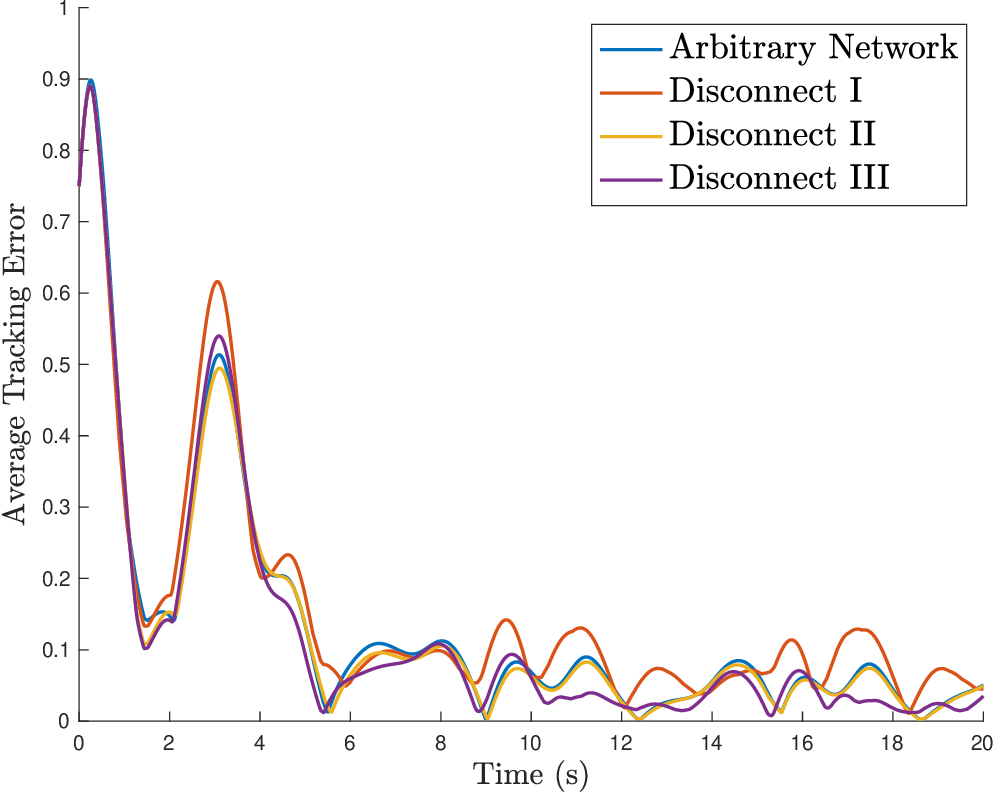}}
    \caption{Robustness of arbitrary network under varying leader and disconnected access.}
    \label{F6}
\end{figure}
\begin{figure}[t!]
    \centering
    \subfloat[\label{F7a}\scriptsize Scenario 1]{\includegraphics[width=0.5\linewidth]{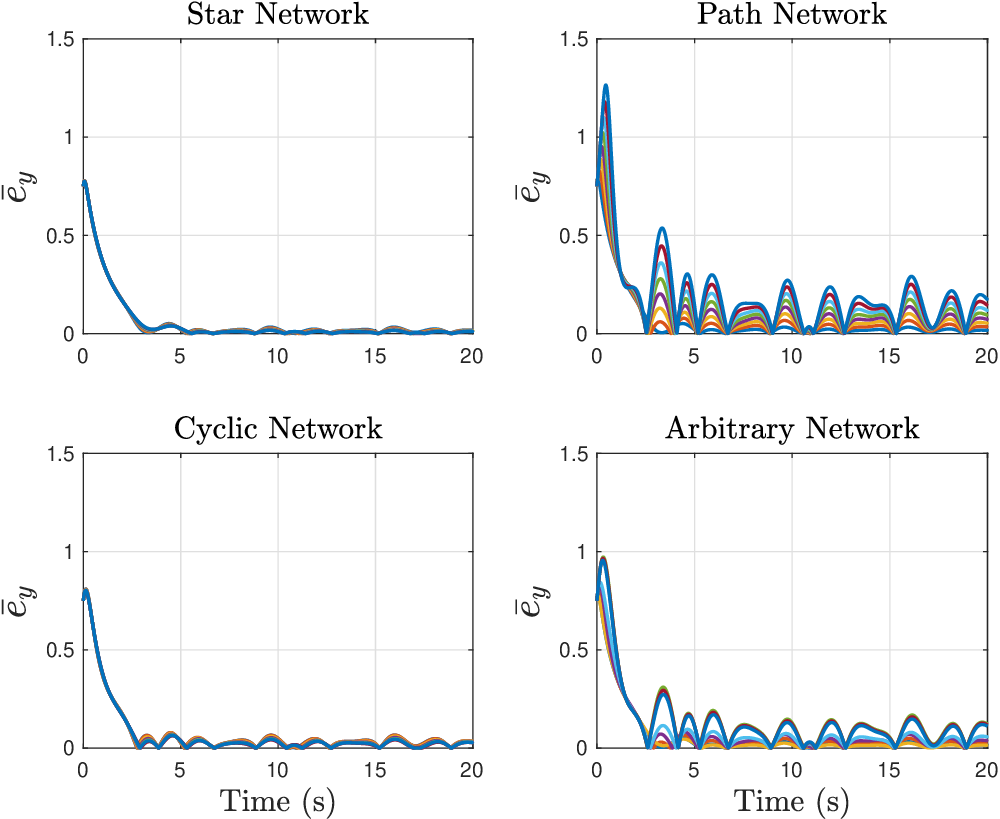}}
    \subfloat[\label{F7b}\scriptsize Scenario 2]{\includegraphics[width=0.5\linewidth]{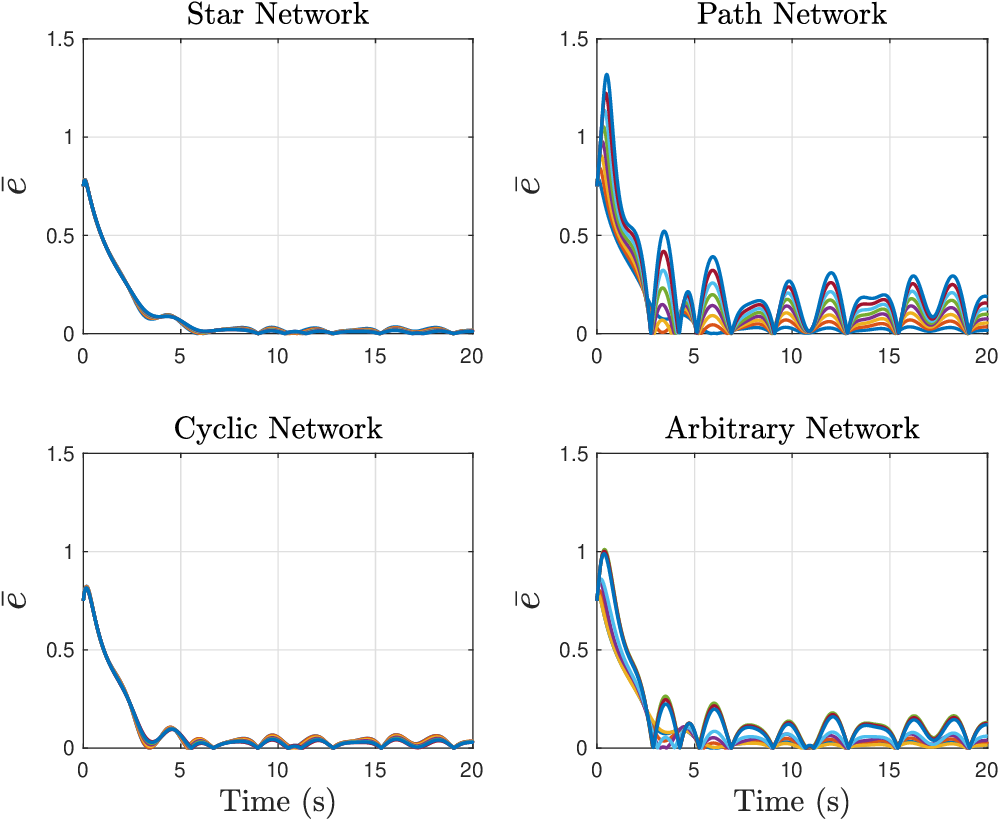}}
    \caption{Performance of distributed adaptive phase-lead compensator strategies.}
    \label{F7}
\end{figure}

\section{Conclusion}\label{sec:Conclusion}

This paper presented a unified framework for distributed adaptive synchronization under position-only communication. We developed two complementary designs: a passivity-certified (SPR) scheme via structured reparameterization, and a passivity-recovered (non-SPR) scheme that uses frequency shaping to restore SPR behavior. Both require no knowledge of the leader’s dynamics, operate on directed graphs satisfying balance and ``reachability'' conditions, and accommodate unknown (possibly unstable) follower dynamics.
Analysis shows global asymptotic synchronization in the disturbance-free case and under bounded time-varying disturbances; with parameter convergence under persistent excitation. Simulations on star, cyclic, path, and arbitrary topologies confirm robust tracking and scalability, and demonstrate that the frequency-shaped non-SPR designs match the SPR performance. Sensitivity experiments highlight how modest leader injection and alternative paths sustain performance despite structural variations. Future work will address communication delays and packet losses in the network \cite{Wafi-SIAM}.

\bibliographystyle{ieeetr}  
\bibliography{EN-Bib}  

\appendix
\begin{proof}[Derivation of \eqref{das:SPR:Edyn:dist}]\label{Apx:1}
With disturbance $\bar{\delta}$, the follower dynamics are \eqref{das:SPR:dynamics:dist}. Recall $\vartheta_u$, $\eta_u$, $\zeta_u$,
and the reparameterized control (cf. \eqref{das:SPR:mod:u})
\begin{equation*}
    \bar{u} = \Phi\vartheta_u + \hat{J}\zeta_u + \hat{B}\dot{\bar{x}}
    = \Phi\vartheta_u + \hat{\Xi}_u\eta_u,
\end{equation*}
where $\hat{\Xi}_u=[\hat{J},\hat{B}]$. Differentiating $\vartheta_u$ and using $\bar{e}=\bar{z}-\bar{x}$ as in \eqref{das:SPR:Edyn:u1}–\eqref{das:SPR:Edyn:u2} gives
\begin{equation}
    \dot{\vartheta}_u \;=\; \zeta_u - \ddot{\bar{x}}. \tag{A1} \label{A1}
\end{equation}
Solve \eqref{A1} for $\ddot{\bar{x}}$:
\begin{equation*}
    \ddot{\bar{x}} = J^{-1}\big(\bar{u} + \bar{\delta} - B\dot{\bar{x}}\big).
\end{equation*}
Substitute $\bar{u}$ and $\ddot{\bar{x}}$ into \eqref{A1}:
\begin{align*}
    \dot{\vartheta}_u
    &= \zeta_u - J^{-1}\big(\Phi\vartheta_u + \hat{J}\zeta_u + \hat{B}\dot{\bar{x}} + \bar{\delta} - B\dot{\bar{x}}\big) \\
    &= \big(I_m - \hat{J}J^{-1}\big)\zeta_u - J^{-1}\Phi\,\vartheta_u - J^{-1}(\hat{B}-B)\dot{\bar{x}} - J^{-1}\bar{\delta}.
\end{align*}
Using $\tilde{\Xi}_u \coloneqq \hat{\Xi}_u - \Xi_u = [\hat{J}-J, \hat{B}-B]$ and
$\tilde{\Xi}_u\eta_u = (\hat{J}-J)\zeta_u + (\hat{B}-B)\dot{\bar{x}}$, we group terms to obtain
\begin{equation*}
    \dot{\vartheta}_u = -J^{-1}\Phi\vartheta_u - J^{-1}\tilde{\Xi}_u\eta_u + J^{-1}\bar{\delta},
\end{equation*}
which is \eqref{das:SPR:Edyn:dist}.
\end{proof}

\end{document}